\documentclass[twoside,leqno,twocolumn]{article}
\pdfoutput=1
\usepackage{SIAMstyle/style}

\usepackage{SIAMproceedings_060921/ltexpprt}

\newcommand{\bbox}{\text{box}}

\usepackage{subcaption}
\usepackage{tikz}
\usepackage{pgfplots}
\pgfplotsset{compat=1.7}

\begin{document}
\title{Parallel Nearest Neighbors in Low Dimensions with Batch Updates}

\author{Guy E.\ Blelloch \thanks{Carnegie Mellon University.} \and Magdalen Dobson \thanks{Carnegie Mellon University.}}

\date{}

\maketitle

\fancyfoot[R]{\scriptsize{Copyright \textcopyright\ 2022 by SIAM\\
		Unauthorized reproduction of this article is prohibited}}


\begin{abstract} \small\baselineskip=9pt
We present a set of parallel algorithms for computing exact k-nearest neighbors in low dimensions. Many k-nearest neighbor algorithms use either a kd-tree or the Morton ordering of the point set; our algorithms combine these approaches using a data structure we call the \textit{zd-tree}. We show that this combination is both theoretically efficient under common assumptions, and fast in practice. For point sets of size $n$ with bounded expansion constant and bounded ratio, the zd-tree can be built in $O(n)$ work with $O(n^{\epsilon})$ span for constant $\epsilon<1$, and searching for the $k$-nearest neighbors of a point takes expected $O(k\log k)$ time. We benchmark our k-nearest neighbor algorithms against existing parallel k-nearest neighbor algorithms, showing that our implementations are generally faster than the state of the art as well as achieving 75x speedup on 144 hyperthreads. Furthermore, the zd-tree supports parallel batch-dynamic insertions and deletions; to our knowledge, it is the first k-nearest neighbor data structure to support such updates. On point sets with bounded expansion constant and bounded ratio, a batch-dynamic update of size $k$ requires $O(k \log n/k)$ work with $O(k^{\epsilon} + \polylog(n))$ span. 
\end{abstract}

\maketitle
\section{Introduction}\label{sec: intro}

Computing nearest neighbors is one of the most fundamental problems in computer science, with applications in diverse areas ranging from graphics~\cite{clarenz2004finite, mitra2004estimating, pauly2003shape, pajarola2005stream} to AI~\cite{javier2012fast} to as far afield as particle physics~\cite{salam2006jet}. Research on nearest neighbors can be roughly divided into two areas: one area focuses on computing approximate nearest neighbors in high dimensions, primarily with clustering as an application. The second focuses on exact (or closer to exact) nearest neighbors in lower dimensions, with tasks such as surface reconstruction~\cite{alexa2001point, fleishman2005robust} as a prominent application. This work focuses on the latter category. 

The most common method of computing nearest neighbors in low dimensions is via a kd-tree~\cite{bentley1975multidimensional}, a tree which keeps the entire bounding box of the point set at its root, and whose children represent progressively smaller enclosed bounding boxes. Kd-trees have many applications in point-based graphics, and have been the data structure of choice for many graphics practitioners~\cite{pajarola2005stream}, even though other methods have better worst-case guarantees. One of the best kd-tree implementations is Arya et al's~\cite{arya1994knn}, which has been used widely by researchers~\cite{clarenz2004finite, mitra2004estimating, pauly2003shape}. Another commonly used library of kd-trees is that of the Computational Geometry Algorithms Library (CGAL)~\cite{tangelder2020spatial}.  Recent work on kd-trees has focused on better theoretical guarantees~\cite{ram2019revisiting}, and with better performance in high dimensions~\cite{chan2019fast}.  

Another approach for computing nearest neighbors uses space-filling curves known as the Morton ordering, z-ordering, or Lebesgue ordering (henceforth Morton ordering).  Recursing by splitting the Morton ordering roughly splits space, making it possible to effectively search for nearest neighbors. Two nearest neighbor algorithms that make use of Morton ordering are Chan's minimalist nearest neighbor algorithm~\cite{chan2006minimalist}, and Connor and Kumar's k-nearest neighbor graph algorithm~\cite{connor2010knn}.  Other approaches to computing nearest neighbors include well-separated decompositions~\cite{callahan1995decomposition}, and Delaunay triangulation~\cite{birn2010simple}.

Some important considerations when choosing a k-nearest neighbors algorithm are how it performs (theoretically as well as practically), does it run efficiently in parallel (since todays machines only have multiple processors), what kind of point sets it handles, and whether it supports dynamic updates (since in many applications point sets change over time~\cite{singh2021fresh}).  Vaidya~\cite{Vaidya86} and Callahan and Kosaraju~\cite{callahan1995decomposition} give strong bounds for general point sets computing all nearest neighbors in $O(n \log n)$ time using variants of kd-trees. Chan improved this to $O(n)$ time if the the ratio of the largest distance to the smallest is polynomially bounded~\cite{chan2008well}.  However these results are limited to static point sets and have not  yet shown to be practical.  Connor and Kumar give bounds under the assumption of bounded expansion constants~\cite{connor2010knn} for a practical algorithm they implement.  There has also been significant interest in parallel algorithms for the problem.  This includes implementations based on MapReduce~\cite{agarwal2016parallel}, for GPUs~\cite{hu2015massively}, the STANN library~\cite{connor2010knn}, and an implementation in CGAL~\cite{alliez2016cgal}.  Although in principle kd-trees should be able to support dynamic updates we know of no libraries that efficiently support them, and few interesting theoretical bounds for the problem in low dimensions.  When considering parallelism and updates together one should be interested in batches of updates that can be processed in parallel.  

In this paper, we present a technique that combines the ideas of kd-trees and Morton ordering to achieve efficient algorithms for k-nearest neighbors in bounded dimension. Some guiding intuition for such a combination is that Morton-based algorithms tend to have quick preprocessing (since only a sort is required) and slower queries; on the other hand, tree-based algorithms can have slower building times but their additional structure leads to faster queries. Thus, combining these approaches may allow us to achieve the advantages of both. In particular we present a k-nearest neighbor algorithm that hybridizes the kd-tree and Morton order approaches by using a kd-tree whose splitting rule is based on the Morton ordering; we call this tree the \textbf{zd-tree}. We also present what is to our knowledge the first parallel batch-dynamic update algorithm for a k-nearest neighbor data structure. We prove the following theoretical results in the context of a point sets with bounded expansion constant and bounded ratio, two reasonable and broadly used assumptions when computing nearest neighbors~\cite{karger2002finding,kazana2013enumeration,segoufin2017constant, gago2009bounded,beygelzimer2006cover,connor2010knn,anagnostopoulos2015low,anagnostopoulos2018randomized, chan2008well, arya1994knn}.

The first result concerns the work and span required to build the zd-tree: 
\begin{oneshot}{Theorem~\ref{thm: treebuild}}
For a point set $P$ of size $n$ with bounded ratio, the zd-tree can be built using $O(n)$ work with $O(n^{\epsilon})$ span, and the resulting tree height $O(\log n)$.
\end{oneshot}

The second result bounds the work for a k-nearest neighbor query on the zd-tree.

\begin{oneshot}{Theorem~\ref{thm: querytime}}
For a zd-tree representing a point set $P$ of size $n$ with bounded expansion, finding the k-nearest neighbors of a point $p \in P$ requires expected $O(k \log k)$ work.
\end{oneshot}
These two theorems together imply a linear-work algorithm for finding the k-nearest neighbors among a set of points (i.e. the k-nearest neighbor graph). They also imply that for a point $q \not \in P$, finding the nearest neighbors requires $O(\log n + k \log k)$ work.

The third result bounds the work and span for batches of updates.
\begin{oneshot}{Theorem~\ref{thm: batchupdate}}
Let $T$ be a pruned zd-tree representing point set $P$, and let $Q$ be a point set of size $k$, such that $|P|+|Q|=n$. Then if $P \cup Q$ and $Q$ both have bounded expansion and bounded ratio in the same hypercube $X$, $Q$ can be inserted into $T$ in $O(k\log(n/k))$ work and $O(k^{\epsilon} + \polylog(n))$ span.  
\end{oneshot}

In additional to the theoretical contributions, we implement both our nearest neighbor searching algorithm and the batch-dynamic updates described above, and we measure our nearest neighbor searching algorithm against a large number of competitors. Our algorithms are optimized for parallelism: in addition to presenting a thread-safe data structure so that queries can be conducted in parallel, we use parallelism when recursively building or updating our kd-tree.
A snapshot of our practical results can be found in Figure~\ref{fig: bythreadall}, which compares the work needed to preprocess and query a point set across our implementation and competitors.

\begin{figure}[t]
\vspace{-.5em}
\begin{center}
%
\begin{tikzpicture}[scale=.8]
	\begin{axis}[
		legend style={font=\footnotesize},
		legend columns =2,
		xlabel={Number of Threads},
		ylabel={Work = Threads $\times$ Time},
		xmin=0, xmax=150,
		ymin=0, ymax=275,
		legend pos=north west,
		ymajorgrids=true,
		grid style=dashed,
		]
		
		\addplot[
		color=blue,
		mark = o,
		]
		coordinates {
			(1,9.228)
			(2,4.686*2)
			(8,1.177*8)
			(20,.64*20)
			(40,.305*40)
			(72,.194*72)
			(144,.157*144)
		};
		\addlegendentry{Leaf-based}

			\addplot[
		color=black,
		mark = star,
		]
		coordinates {
			(1,14.668)
			(2,7.324*2)
			(8,2.96*8)
			(20,.926*20)
			(40,.445*40)
			(72,.29*72)
			(144,.198*144)
		};
		\addlegendentry{Root-based}
		
				\addplot[
		color=red,
		mark = triangle,
		]
		coordinates {
			(1,97.2300)
			(2,50.5493*2)
			(8,14.0612*8)
			(20,5.5863*20)
			(40,2.9030*40)
			(72,1.8908*72)
			(144,1.2769*144)
		};
		\addlegendentry{Chan}
		
				\addplot[
		color = brown,
		mark = x,
		]
		coordinates {
			(1,61.2762)
			(2,33.9347*2)
			(8,13.0728*8)
			(20,9.8974*20)
		};
		\addlegendentry{KNNG}
		
			\addplot[
		color=green,
		mark = square,
		]
		coordinates {
			(1,48.8569)
			(2,24.7384*2)
			(8,7.1832*8)
			(20,3.1423*20)
			(40,1.7805*40)
			(72,1.0974*72)
			(144,.8080*144)
		};
		\addlegendentry{ParlayKNNG}
		
		\addplot[
		color=violet,
		mark = diamond,
		]
		coordinates {
			(1,10.6273)
		(2,5.7972*2)
		(8,2.8556*8)
		(18,2.1349*18)
		(36,2.1758*36)
	};
	\addlegendentry{CGAL}

	\end{axis}
\end{tikzpicture}

\end{center}\vspace{-1.5em}
\caption{\small A figure showing the work (threads $\times$ time) performed by various nearest neighbor algorithms as the number of threads increases. The k-nearest neighbor graph was computed on 10 million points from a random dataset within a 3D cube. Ideally the line for a particular algorithm would be both low (small total work) and straight (indicating more threads does not change the total work).}
\label{fig: bythreadall}
\vspace{-1em}
\end{figure}
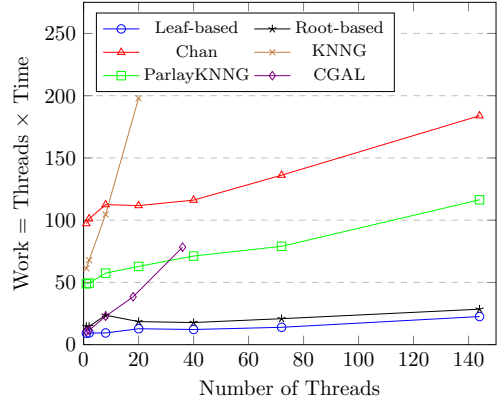

Our experimental results show the following:
\begin{enumerate}
	\item Our k-nearest neighbor algorithms achieve \textbf{high parallelism.} Using our basic algorithm to query all nearest neighbors of a 3D dataset with 10 million points achieves 75-fold speedup on a 72-core Dell R930 with 144 hyper-threads. 
	\item Our algorithms are \textbf{fast}. Our algorithm's speed is robust across all the measures which we tested---adversarial datasets, varying $k$, varying the size of the dataset, and varying the number of threads. In most cases, it beats its competitors by close to an order of magnitude. 
	\item Our batch-dynamic updates \textbf{drastically decrease the cost per insertion}. An insertion of one point into a tree
          of $5,000,000$ points takes about $10^{-5}$ seconds, while an insertion of $100,000$ points takes $10^{-7}$ seconds per pointelement.
\end{enumerate}

\subsection{Preliminaries.}\label{subsec: prelims}

For the special case where we are given a point set $P$ and wish to calculate the k-nearest neighbors of all points in $P$, we refer to the result as the \textbf{k-nearest neighbor graph} of $P$. A query of a point not in $P$ is a \textbf{dynamic query}, and a query of a point in the tree is sometimes referred to as as a \textbf{non-dynamic query}. Similarly, adding a point to or deleting a point from the tree is referred to as a \textbf{dynamic update}, which is \textbf{batch-dynamic} if the updates are processed in batches rather than one at a time.

\paragraph{Kd-trees.} Many nearest neighbor algorithms use a kd-tree as the data structure to query for nearest neighbors. Given a set of points $P$ in a $d$-dimensional bounding box, a kd-tree splits the data into two smaller bounding boxes at every level of the tree. Designing a kd-tree requires making a choice of \textbf{splitting rule}---that is, how the bounding box will be divided. One common splitting rule is to divide the bounding box of points along its largest dimension; another is to divide the space such that equal numbers of points are on each side. A variant of the kd-tree is the \textit{\{quad,oct\}-tree}, where each internal node of the tree has $2^d$ equal sized children in $d$-dimensional space.  

\paragraph{Morton ordering.} A common tool used in designing k-nearest neighbor algorithms is the Morton ordering. For a set of points whose coordinates are $d$-vectors of integers $(x_1, x_2, \ldots, x_d)$, the Morton ordering is calculated by taking each integer coordinate in binary form, interleaving the coordinates to create one integer per point, then sorting using the interleave integers. Nearest neighbor algorithms take advantage of the following property of the Morton ordering: given any two points $p$ and $q$ and the rectangle defined by those points as the corners, then all points in the rectangle must fall between $p$ and $q$ in the
Morton ordering. This allows pruning regions of the ordering.

\paragraph{Bounded expansion constant.}   
Several previous works for nearest neighbors in metric spaces have assumed a bounded expansion constant~\cite{karger2002finding, kazana2013enumeration, segoufin2017constant, gago2009bounded,beygelzimer2006cover,connor2010knn,anagnostopoulos2015low, anagnostopoulos2018randomized}, which roughly requires that the density of points in the metric space does not change rapidly.   
In the context of Euclidean space we use the following definition. Given a point $p_i$ and a positive real $r$, let $\bbox(p_i, r)$ denote the box centered at $p_i$ with radius $r$ (that is, half the side length).

\begin{definition}\label{def: expansion}
Given a point set $P$ contained in a bounded Euclidean space $X$, $P$ has \textbf{expansion constant} $\gamma$ if for all $x \in X$ and all positive real $r$, if $|\bbox(x, r)| = k$ for any $k > 1$ then
\begin{align*}
|\bbox(x, 2r)| \leq \gamma k.
\end{align*}
The expansion constant is referred to as \textbf{bounded} if $\gamma = O(1)$.
\end{definition}

\paragraph{Bounded Ratio.} Another property that will be needed to prove some of our theorems is that of the point set having bounded ratio. This is a commonly used property in problems such as nearest neighbors and closest pair~\cite{arya1994knn, chan2008well}.

\begin{definition}\label{def: ratio}
Given a point set $P$ of size $n$, let $d_{\max}$ denote the maximum distance between any two points in the set, and let $d_{\min}$ denote the minimum distance between any two points in the set. Then $P$ has \textbf{bounded ratio} if 
\begin{align*}
	\frac{d_{\max}}{d_{\min}} = \text{poly}(n).
\end{align*}
\end{definition}


\paragraph{Model of computation.} Our results for the parallel algorithms are given for the binary-fork-join model~\cite{BlellochF0020}. In this model, a process can fork two child processes, which work in parallel and when both complete, the parent process continues.  Costs are measured in terms of the work (total number of instructions across all processes) and span or depth (longest dependence path among processes).  Any algorithm in the binary forking model with $W$ work and $S$ span can be implemented on a CRCW PRAM with $P$ processors in $O(W/P + S)$ time with high probability~\cite{ABP01,blumofe1999scheduling}, so the results here are also valid on the PRAM, maintaining work efficiency.

\subsection{Related Work.}\label{subsec: relatedwork}

Arya and Mount's k-nearest neighbor implementation~\cite{arya1994knn} is commonly referred to as the state-of-the-art sequential k-nearest neighbor algorithm for low dimensions. Their implementation uses a kd-tree known as a \textit{balanced box decomposition (BBD) tree}, whose splitting rule attempts to get the best of both commonly used splitting rules---that is, splitting a bounding box into approximately equal areas which also have approximately equal numbers of points. They show that using a BBD tree, an approximate k-nearest neighbor query takes $O(\frac{k}{\epsilon} \log n)$ work. The BBD tree theoretically supports insertions, but their given implementation does not. In~\cite{connor2010knn}, Connor and Kumar parallelize Arya and Mount's all-nearest neighbors implementation and show that theirs produces faster results, so we compare against Connor and Kumar's instead of theirs. 

One nearest neighbor algorithm which uses the Morton ordering instead of a kd-tree is Chan's ``minimalist" nearest neighbors algorithm~\cite{chan2006minimalist}, which has a theoretical guarantee of $O(n \log n)$ expected preprocessing time and $O(\frac{1}{\epsilon} \log n)$ expected time per query for approximate nearest neighbors. The algorithm is notable for both its simple proof and strikingly minimalist implementation, whose sequential version requires fewer than 100 lines of code in C++.   Chan's algorithm first randomly shifts the coordinate of each point, then sorts the points using the Morton ordering. The algorithm then uses an implicit tree, recursively dividing the sorted points and visiting every implicit vertex which is within some radius of the query point. An adversarial case for this algorithm is when some query point $q$ is in the right half of the sorted data and its nearest neighbor $p$ is in the left half, causing the algorithm to search a large number of vertices. The random shift helps avoid this case in expectation.

The k-nearest neighbor implementation that most closely matched ours---in that it is tailored for parallelism and for exact nearest neighbor searching in low dimensions---is that of Connor and Kumar in~\cite{connor2010knn}, where STANN stands for Simple Threaded Approximate Nearest Neighbor. Their algorithm makes several improvements on Chan's algorithm, especially for the case of computing the k-nearest neighbor graph. Their main improvement is to search from the leaf of the implicit tree rather than the root, which allows for the possibility of searching only $O(k)$ implicit nodes instead of at least $O(\log n)$ (and this would be a best case scenario where the tree is perfectly balanced). Indeed, they find that if the input point set has bounded expansion constant, their data structure uses $O(n \log n)$ work and their nearest neighbor queries use expected $O(k \log k)$ work.    Their algorithm only works for static point sets and as our experiments show is not as fast as ours.



Another well-known tree used for computing nearest neighbors is Callahan and Kosaraju's well-separated pair decomposition~\cite{callahan1995decomposition}.  For $n$ points, they can build their tree (similar to a kd-tree) in $O(n \log n)$ work and polylogarithmic span (in parallel).   Based on the tree, they can build the decomposition and find nearest neighbors in $O(n)$ work and polylogarithmic span. The approach is only described for the static case.
Another approach is to use the Delaunay triangulation of the set of points~\cite{birn2010simple}.  Although this seems to work reasonably well in two dimensions, in three and higher dimensions it can be very expensive. Beyond bounded expansion constant, another common geometric assumption used for finding nearest neighbors or the closest pair is a bound on the ratio of the furthest pair and the closest pair in a dataset~\cite{arya1994knn, clarkson1994algorithm, clarkson1999nearest, erickson2003nice,chan2008well}.


\section{Algorithm Design and Bounds}\label{sec: algo}

Here we describe our algorithms for constructing a data structure for k-nearest neighbors, querying the structure, and batch updating it.

\textbf{Data structure.} The data structure we use for nearest neighbor searching is a kd-tree whose splitting rule uses the Morton ordering; this is what we refer to as the zd-tree. Since the Morton ordering is just the interleaving of the bits of each coordinate, the tree is built by letting the root represent the entire bounding box, and splitting the points into child nodes at level $i$ based on whether the bit at place $i$ is 0 or 1.  In three dimensions, our tree is almost equivalent to an oct-tree in which every three levels of our tree corresponds to one level of the oct-tree; however, the leaves can be at different levels.
Each internal node of the tree stores the two opposing corners defining its bounding box, its two children, and its parent.  Each leaf node stores its two opposing corners, its parent, and the set of points it contains.  We bound the number of points in a leaf by a constant, and a leaf can be empty. Note that every point covered by the root bounding box is included in exactly one leaf node.

\textbf{Construction.}  Before the zd-tree can be built, we preprocess the input. Firstly, motivated by Chan~\cite{chan2006minimalist}, and necessary for our bounds (the proof of Theorem~\ref{thm: querytime}), we select a random shift for each coordinate, and shift all the coordinates by this amount.   This shift is kept throughout.  We then sort the points by the Morton order.  This can use Chan's comparison function, which leads to an $O(n\log n)$ work sort, but as we describe in Section~\ref{subsec: theory} can be reduced to a linear time radix sort with span $O(n^\epsilon)$~\cite{jaja1992parallel} when assuming a bounded expansion constant.  In this case the number of bits needed for the Morton order can bounded by $O(\log n)$.      

After shifting and sorting we apply a divide-and-conquer
algorithm (Algorithm~\ref{algo: treebuild}) to build the zd-tree.    The algorithm recurses at each level of the tree on the two sides of the cut for the given bit of the Morton ordering. Importantly, finding the cut in the routine \texttt{splitUsingBit} only requires a binary search since the points are sorted by Morton order. This implies that when the tree is sufficiently shallow (guaranteed by bounded ratio) the work to build the tree is only linear, and the parallel depth is low.   Even if completely imbalanced the work would be $O(n \log n)$.

\paragraph{Downward search algorithm.} Our downward search algorithm is detailed in Algorithm~\ref{algo: naivesearch}.  The algorithm maintains a current set of $k$ nearest neighbors, which starts empty and is improved over time by inserting closer points. In our pseudocode, we use $N$ to represent the nearest neighbor candidate set. The downward search works as follows: let $r$ be the distance from $p$ to the furthest element in $N$ if $N$ contains at least $k$ elements, or infinity otherwise. Now search vertex $v$ only if the bounding box for $v$ intersects a ball of radius $r$ around $p$.  This is determined by the \texttt{withinBox} function. If the node is a leaf, iterate through the points it contains and update the set of nearest neighbors if necessary. If it is not a leaf, recurse on its children, searching first the child whose center is closest to the query point $p$. 

Our \textbf{root-based} algorithm simply starts at the root of the zd-tree with an empty $N$ and applies \texttt{searchDown}, but we also use \texttt{searchDown} in our upward algorithm.

\paragraph{Upward search algorithm.}  
Our upward search algorithm is detailed in Algorithm~\ref{algo:searchup}. It always starts at the leaf in the tree containing the point $p$ and works its way up the tree.  The idea is that in general, only a small part of the tree needs to be examined.  It uses the downward search as a subroutine.  As in the downward search, it maintains a priority queue $N$, initially empty, of the current estimate of the nearest $k$ neighbors, which is improved over time helping to prune further search.  The algorithm starts at the leaf by adding any points in the leaf to $N$.  Then, as with the downard version, let $r$ be the distance between $p$ and $k$-th nearest neighbor in $N$, or infinity if there are not $k$ neighbors in $N$ yet.  Now search the parent of the current node if and only if the ball of radius $r$ around $p$ extends outside the bounding box of the current node.  Otherwise we know there are no points not included in the current node that could be closer than those in $N$.   This can use the same \texttt{withinBox} as used in the downward algorithm, but with a negative $r$.  When searching the parent we search the parent's other child using the downward algorithm.

Finding the leaf in which a point $p$ belongs, which is needed, depends on whether we are generating a $k$ nearest neighbor graph or using the the structure for dynamic searches for points not in the set. In the first case we know the leaf since each point is in a leaf.  Therefore to generate a k-nearest neighbor graph we need just build the tree and then run \texttt{searchUp} on each point in each leaf. We refer to this as the \textbf{leaf-based} version of our algorithm. In the second case we have to search down the tree from the root to find the location of the leaf.   This can use the bits of the Morton ordering to decide left or right.  We refer to this as the \textbf{bit-based} version, and the downward search from the root as the \textbf{root-based} version.

\textbf{Batch-dynamic updates.} The tree data structure naturally lends itself to the possibility of dynamic insertions and deletions. Insertion of a new point $q$ into a zd-tree $T$ is conceptually simple: locate the leaf of $T$ which $q$ should be inserted into; then either add $q$ to the sequence of points contained in the leaf, or if $q$ would cause the number of points in the leaf to exceed some cutoff, split the leaf into two children. This concept can be refined to a parallel batch-dynamic algorithm, which takes a set of points and recurses in parallel down the right and left children of the root. One small subtlety is that to avoid cases where an insertion might require a rebuild of the entire tree, we require a bounding box that all data will be contained in to be specified before building the initial tree. 

As with building the tree from scratch, a batch-dynamic insert starts by using the random shift to offset the points and
sorting the points to be added based on their Morton ordering. We then apply the recursive algorithm shown in Algorithm~\ref{algo: batchdynrec}.  Deletions use an almost identical algorithm.

\begin{algorithm2e}[t!]
	\caption{buildTree($P, b)$}
	\label{algo: treebuild}\small
	\SetKwBlock{ParDo}{do in parallel}{end}
	\SetAlgoLined
	\KwIn{A set of randomly shifted points sorted according to
          their Morton ordering and an integer $b$ representing the
          bit we are working on, starting with the highest bit.}
	\KwOut{The leaf or internal node that contains $P$'s bounding box}
	\uIf{$b==0$ or size($P$) $ < $ sizeCutoff}{
		\KwRet{createLeaf($P$)}
	} \Else{
		$i = \text{splitUsingBit}(P, b)$ \;
		\ParDo{ 
		$L$ = buildTree($P[1:i], b-1$) \;
		$R$ = buildTree($P[i:n], b-1$) \;
		}
		\KwRet{createInternalNode($L$, $R$)} \;
	}	
	\vspace{0.5em}
\end{algorithm2e}
\begin{algorithm2e}[t!]
	\caption{searchDown($T, p, N)$\protect\\
needed subroutines:\\
\textbf{distance}$(p,N,k) $ returns
infinity if $N$ has fewer than $k$ points and otherwise returns the distance from $p$ to the furthest point in $N$; $k=1$ if not specified.\\
\textbf{insert}$(N, p, k)$ adds $p$ to
the set $N$ keeping only the $k$ closest points to $p$.\\
\textbf{withinBox}$(T, p, r)$ returns true if $p$ is within a 
distance $r$ of the bounding box for $T$.
}
	\label{algo: naivesearch}\small
	  	\SetAlgoLined
	\KwIn{A pointer to a tree node $T$, the query point $p$, and a 
        current set of up to $k$ nearest neighbors $N$.}
	\KwOut{The $k$-nearest neighbors of $p$.}
	$r \leftarrow \text{distance}(p, N,k)$ \;
	\uIf{withinBox($T, p, r$)}{
		\uIf{$T = $ Leaf}{
			$Q \leftarrow$ set of points contained in $T$ \;
			\For{$q \in Q$}{
				\uIf{$q \not= p$}{
					\uIf{$\text{distance}(q,p) <\text{distance}(p, N, k)$}{
						insert$(N, p, k)$\;
					}
				}
			}
		}
		\Else{
			$R \leftarrow $ $T$.Right() \;
			$L \leftarrow $ $T$.Left() \;
			$\ell \leftarrow$ distance($p, L$.center())\;
			$r \leftarrow$ distance($p,R$.center()) \;
			\uIf{$\ell < r$}{
				N ' = searchDown($L, p, N$)\;
				\Return{searchDown($R, p, N'$)\;}
			} 
			\Else{
				N' = searchDown($R, p, N$)\;
				\Return{searchDown($L, p, N'$)\;}
			}
		}
	}
	
	\vspace{0.5em}
\end{algorithm2e}
\begin{algorithm2e}[t!]
	\caption{searchUp($C, p)$\protect\\
\textbf{withinBox}$(T, p, r)$ with negative $r$ returns true if $p$ is
in within the bounding box of $T$ and at least $r$ from the boundary.}
	\label{algo:searchup}\small
	\SetAlgoLined
	\KwIn{A leaf $C$ of the kd-tree and a point $p$
          within the bounding box of $C$.}
         $N = \emptyset$\\
	$Q \leftarrow$ set of points contained in $C$ \;
	\For{$q \in Q$}{
		\uIf{$q \not= p$}{
			\uIf{$d(q,p) < \text{distance}(p, N, k)$}{
				insert$(N, p, k)$\;
			}
		}
	}
	$r \leftarrow \text{distance}(p, N, k)$\;
	$P \leftarrow C$.Parent() \;
	\While{not withinBox$(C, p, -r)$ and $P \not= \top$}{
		\uIf{$P.\text{Left()} = C$}{
			$N$ = searchDown($P.\text{Right()}, p, N$)\;
		} 
		\Else{
			$N$ = searchDown($P.\text{Left()}, p, N$)\;
		}
		$C= P$\;
                $r \leftarrow \text{distance}(p, N, k)$\;
                $P \leftarrow C$.Parent() \;
	}
        \Return{$N$}
	\vspace{0.5em}
\end{algorithm2e}
\begin{algorithm2e}[t!]
	\caption{batchInsert($T, P)$}
	\label{algo: batchdynrec}\small
	\SetKwBlock{ParDo}{do in parallel}{end}
	\SetAlgoLined
	\KwIn{A pointer to a node $T$ of the kd-tree and a set of
          points $P$ contained in its bounding box and sorted
          according to their Morton order}
	\uIf{$T = $ Leaf}{
		\uIf{size($P$) + size($T$) < leafCutoff}{
			Insert $p \in P$ into $T$ \;
		} \Else{
			Split $T$ into multiple leaf nodes \;
		}
	} \Else{
		$b = T\rightarrow$bit \;
		$i = \text{splitUsingBit}(P, b)$ \;
		\ParDo{
			batchInsert($T \rightarrow \text{Right}, P[1:i]$) \;
			batchInsert($T \rightarrow \text{Left}, P[i:n]$) \;
		}
	}	
	\vspace{0.5em}
\end{algorithm2e}

%
%

\subsection{Theoretical Results.}\label{subsec: theory}
In this section, we give theoretical results on the performance of our algorithms when assuming bounded expansion constant. The results in the rest of the section assume that the point set $P$ has bounded expansion constant $\gamma \geq 2$ as well as bounded ratio, and they assume that the dimension $d = O(1)$. We also require that every coordinate of every point is unique. This is a fair assumption to make in the context of nearest neighbors, since every point set that does not have this property can be transformed into one that does and where each point retains the same nearest neighbors. The presented proofs in this section assume that $X$ is a bounding cube, but the full version of the paper contains the proofs of the same results where $X$ is any convex region. Wherever not otherwise specified, we use $B$ to denote the bounding box of the randomly shifted point set; note that the side lengths of $B$ can be at most twice the side lengths of the smallest bounding box containing $X$. 

\begin{figure*}
\vspace{-.5em}
	\begin{subfigure}{.66\columnwidth}
	\includegraphics[width=\columnwidth]{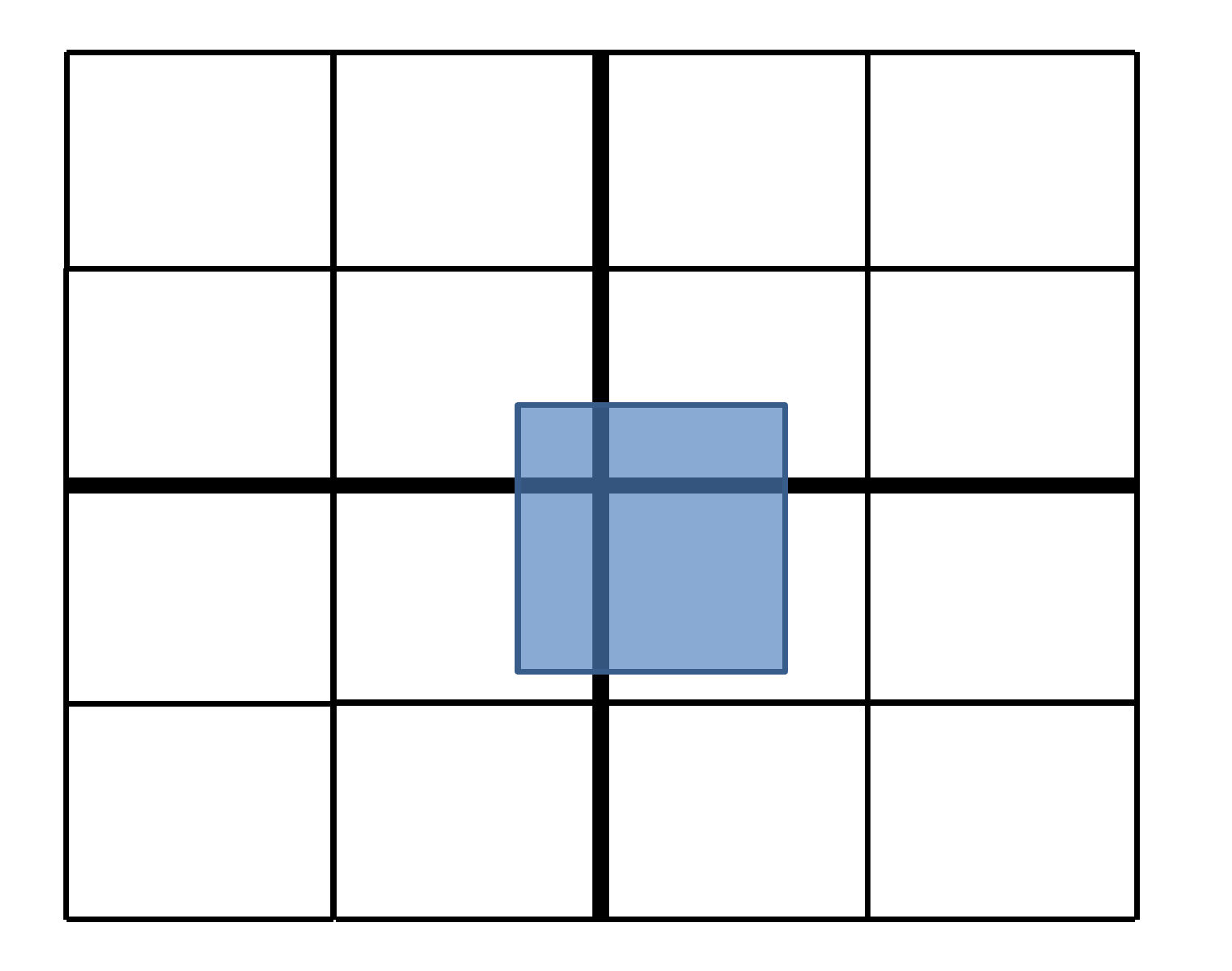}
	\caption{\small A graphic illustrating a box situated inside the largest possible quad-tree box as referenced in Lemma~\ref{lem: traversalcost}.}
	\end{subfigure}~~
	\begin{subfigure}{.66\columnwidth}
		  \includegraphics[width=\columnwidth]{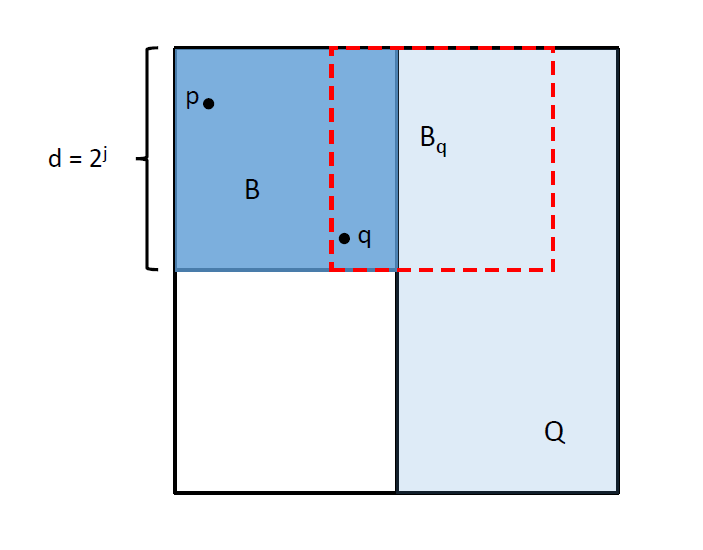} 
		\caption{\small A figure illustrating the setup in Lemma~\ref{lem: interiorcuts}. Here, the empty boxes represent empty splits, the light blue boxes represent unbalanced splits, and the dark blue box represents the box containing $p$.}
	\end{subfigure}~~
	\begin{subfigure}{.66\columnwidth}
		 \includegraphics[width=\columnwidth]{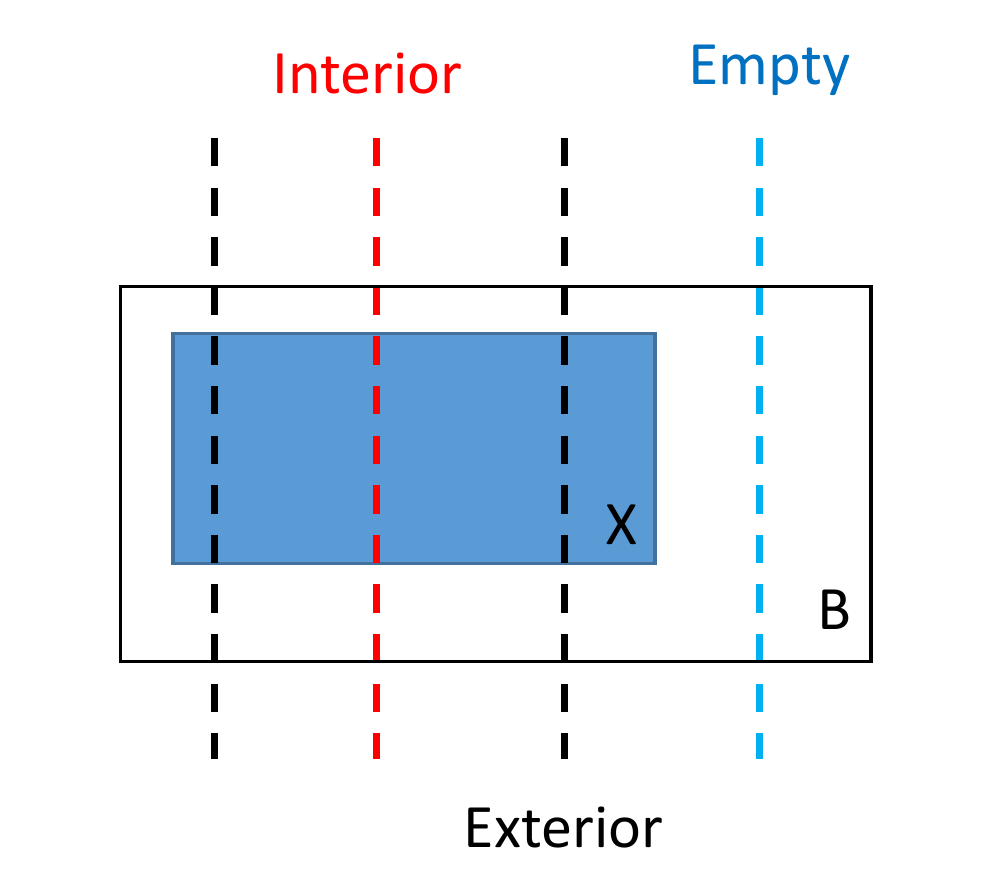} 
		\caption{\small An illustration of the three types of cuts referenced in Lemma~\ref{lem: exteriorcuts}: interior in red, exterior in black, and empty in blue.}
	\end{subfigure}~~
	
	\caption{Figures aiding with the proofs in Section~\ref{subsec: theory}.} 
  \vspace{-1em}
\label{fig: theoryfig}
\end{figure*}

Our first theorem concerns the height and build time of a zd-tree on a point set with bounded ratio.

\begin{theorem}\label{thm: treebuild}
For a point set $P$ of size $n$ with bounded ratio, the zd-tree can be built using $O(n)$ work with $O(n^{\epsilon})$ span, and the resulting tree height $O(\log n)$.
\end{theorem}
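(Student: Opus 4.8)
The plan is to prove the three parts---height $O(\log n)$, work $O(n)$, and span $O(n^\epsilon)$---separately, and to derive the height bound first, since it is what drives the other two. I would start by arguing that, after the random shift, the bounding box $B$ has side length $O(d_{\max})$, while the Morton order only needs to resolve coordinates down to grid cells of side $\Theta(d_{\min})$: a cell of side below $d_{\min}/\sqrt{d}$ has diameter below $d_{\min}$ and hence contains at most one point, so Algorithm~\ref{algo: treebuild} must terminate once cells reach that scale. The number of bits per coordinate needed to refine $B$ to this resolution is $\log_2(d_{\max}/d_{\min}) + O(1)$, which by Definition~\ref{def: ratio} equals $\log_2(\mathrm{poly}(n)) + O(1) = O(\log n)$. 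Since $d = O(1)$, the full Morton key has $O(\log n)$ bits, and because the algorithm consumes exactly one bit per level, the recursion bottoms out after $O(\log n)$ levels; this gives tree height $O(\log n)$ and simultaneously shows that any leaf created when the bits run out holds at most one point.

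For the preprocessing I would note that the shift is $O(n)$ work and $O(1)$ span, and that because the keys now fit in $O(\log n)$ bits the sort can be realized as a parallel radix sort in $O(n)$ work and $O(n^\epsilon)$ span~\cite{jaja1992parallel}. For the recursive build the work is the total cost of the \texttt{splitUsingBit} calls, namely $\sum_v O(\log m_v)$ over internal nodes $v$, where $m_v$ is the number of points at $v$ and each split is a binary search in a sorted subarray. On a tree whose splits are balanced this telescopes: writing the node sizes level by level, using that they partition at most $n$ points, and summing the resulting geometric series yields $\sum_v \log m_v = O(n)$. The span of the build is bounded by the height times the per-node binary-search cost, i.e. $O(\log n)\cdot O(\log n)=O(\log^2 n)$, which is dominated by the radix sort; combined with the linear build work this would give the claimed $O(n)$ work and $O(n^\epsilon)$ span.

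The hard part will be the work bound, and specifically controlling $\sum_v \log m_v$ when the tree is \emph{not} balanced. Bounded ratio guarantees the tree is shallow but says nothing about the evenness of the splits, and the telescoping above fails badly for skewed trees. The worst case I would worry about is a point set built from $\Theta(n)$ constant-size tight clusters, one sitting in each cell of a coarse grid: each cluster stays intact through a chain of $\Theta(\log n)$ internal nodes, each with one empty child, before its points finally separate. Such an input still has bounded ratio (and, I believe, bounded expansion), yet it forces $\Theta(n\log n)$ internal nodes, so \emph{any} per-node accounting---even charging each binary search to the smaller side via an exponential search, which makes a one-sided split cost only $O(1)$---leaves the total at $\Theta(n\log n)$ rather than $O(n)$.

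Consequently I expect the crux of the linear-work claim to be bounding the \emph{number} of internal nodes, not the cost per node. The natural way to close this gap is to contract each maximal chain of one-empty-child splits into a single compressed node, as in a compressed quadtree, so that every surviving internal node is genuinely branching; the branching nodes number at most one less than the $O(n)$ nonempty leaves, and the per-level partition argument then collapses the cost to $O(n)$. I would therefore carry out the work bound on this compressed view of the tree (arguing the compression can be computed within the same asymptotic work and span during the build), and flag the uncompressed chain accumulation as the one place where the standing assumptions alone, without this structural observation, only deliver the weaker $O(n\log n)$ bound that the text already anticipates for the imbalanced case.
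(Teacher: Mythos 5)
Your height and sorting arguments are the same as the paper's: its entire proof of Theorem~\ref{thm: treebuild} consists of (i) the halving argument --- the bounding cube has side within a constant factor of $d_{\max}$ and must be halved only $O(\log n)$ times before the pair at distance $d_{\min}$ separates, since $d_{\max}/d_{\min}=\mathrm{poly}(n)$ --- and (ii) the observation that $O(\log n)$-bit keys permit a linear-work radix sort with $O(n^{\epsilon})$ span. Your packing refinement (cells of side below $d_{\min}/\sqrt{d}$ hold at most one point) is a cleaner version of (i). Where you genuinely diverge is the post-sort recursion, and there your audit goes beyond the paper: the paper's proof says nothing about the work of the recursive build, leaning on the earlier informal remark that shallowness ``implies'' linear build work. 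Your clustered example shows that this inference does not follow from the stated hypotheses: $\Theta(n)$ tight clusters of leaf-cutoff size, each surviving a chain of $\Theta(\log n)$ one-sided splits, has bounded ratio (and, as you suspect, bounded expansion --- doubling a box never increases its count by more than a constant factor in that configuration, so even the section's standing assumption does not rescue the per-node accounting), yet forces $\Theta(n\log n)$ split operations even at $O(1)$ each. The paper is implicitly aware of the issue --- it elects not to materialize empty cuts, and Theorem~\ref{thm: batchupdate} speaks of a ``pruned'' zd-tree --- but it never folds this into the proof of Theorem~\ref{thm: treebuild}. Your compression repair (jump directly to the next branching bit, computable in $O(1)$ from the XOR of the first and last Morton keys of the subarray) is exactly the missing structural step, and flagging it is a real contribution rather than a deviation.

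One stitch in your own assembly is loose: on the compressed tree, the ``per-level partition argument'' does not by itself collapse $\sum_v \log m_v$ to $O(n)$. Per level it gives only $O(n)$, hence $O(n\log n)$ over $O(\log n)$ levels, and bounded ratio supplies no weight balance, so even with only $O(n)$ branching nodes a plain binary search at every node can cost $\omega(n)$ in total (a size-class count gives $O(n\log\log n)$, not $O(n)$). You need the other ingredient you mention only in passing: use exponential search from both ends so that a split of $m$ points into $m_1\le m_2$ costs $O(1+\log m_1)$, and then close the bound by counting rather than telescoping --- a node pays $\log m_1 \ge j$ only if both its children hold at least $2^j$ points, and the nodes of size at least $2^j$ form a subtree whose leaves are disjoint sets of size at least $2^j$, so there are at most $n/2^j$ such branching nodes; summing over $j$ gives $\sum_v \bigl(1+\log \min(m_1,m_2)\bigr) = O\Bigl(\sum_{j\ge 0} n/2^j\Bigr) = O(n)$ on any branching binary tree, with no balance assumption. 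With that substitution your proof is complete, and it is strictly more careful than the paper's own, which proves only the depth and sorting claims.
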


\begin{proof}
For the tree depth and work bound, we need to show that the longest path in the tree has length $O(\log n)$. The bounding cube of $X$ has side length within a constant factor of $d_{\max}$, and it must be divided until the two points whose distance between them is $d_{\min}$ are in separate cubes. Since $d_{\max}/d_{\min} = \text{poly}(n)$, $d_{\max}$ must be halved $O(\log n)$ times to reach $d_{\min}$. Thus the tree has $O(\log n)$ depth.

For the sorting claim, we wish to show that a parallel radix sort can be used. For the radix sort to require $O(n)$ work, we need to guarantee that only $O(\log n)$ bits are required to sort the dataset. This follows directly from the fact that the tree has depth $O(\log n)$.
\end{proof}

Now, we work towards the theorem on the expected time required to query the k-nearest neighbors of a point $p$. Like the result from Connor and Kumar in~\cite{connor2010knn}, the $O(k \log k)$ bound only applies when searching for the nearest neighbors of a point already in the tree; for a dynamic query, finding the leaf to start from requires $O(\log n)$ time. The following proofs assume without loss of generality that a leaf of the zd-tree contains no more than $k$ points. Furthermore, we assume that the tree is a true quad-tree, meaning that each internal node has $2^d$ children; thus we use the term ``quad-tree box" to refer to the box belonging to a tree node. This will slightly simplify the analysis, and it can only be worse than the performance of the actual algorithm.

\begin{theorem}\label{thm: querytime}
For a zd-tree representing a point set $P$ of size $n$ with bounded expansion, finding the k-nearest neighbors of a point $p \in P$ requires expected $O(k \log k)$ work.
\end{theorem}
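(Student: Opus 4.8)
The plan is to bound the work of the leaf-based upward search (Algorithm~\ref{algo:searchup}), which starts at the leaf containing $p$ and climbs, calling \texttt{searchDown} on sibling subtrees. I would split the total work into (i) distance evaluations at points examined in visited leaves, (ii) the cost of maintaining the size-$k$ candidate queue $N$, and (iii) the box-intersection tests performed while traversing internal nodes and empty cells. Let $r^*$ be the distance from $p$ to its $k$-th nearest neighbor, so $B(p,r^*)$ holds exactly $k$ points. The $\log k$ factor should come essentially for free from (ii): each insertion into $N$ costs $O(\log k)$, so it suffices to prove that the number of examined points and the number of traversed nodes are each $O(k)$ in expectation.

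For the examined points I would use the bounded expansion constant (Definition~\ref{def: expansion}). Since $\bbox(p,r^*)$ contains at least the $k$ nearest neighbors, iterating the definition gives $|\bbox(p,2^{j}r^*)| \le \gamma^{j}\,|\bbox(p,r^*)|$, so for any constant $c$ the box $\bbox(p,c\,r^*)$ holds $O(k)$ points because $\gamma, c, d = O(1)$. Because \texttt{searchDown} prunes any box farther than the current radius $r \le r^*$ from $p$, every examined point lies within distance $O(r^*)$ of $p$ (a leaf meeting $B(p,r^*)$ extends it by at most its own diameter), and hence only $O(k)$ points are ever touched. This accounts for (i) and, through the queue, for (ii).

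The crux is (iii): bounding the number of nodes and empty cells traversed, which does not follow from the point count, because the climb can pass through long single-occupied chains and the quad-tree can contain many empty cells. Here I would bring in the random coordinate shift together with a classification of the cuts encountered, as suggested by Figure~\ref{fig: theoryfig}. For the climb, the search stops at the first ancestor box that fully contains $B(p,r^*)$; I would show via the random shift that the probability this box has side at least $2^{j}r^*$ decays geometrically in $j$ (each extra coarse level requires the randomly placed dyadic grid to avoid splitting a width-$\Theta(r^*)$ interval), so in expectation only $O(1)$ levels are climbed above the natural level of side $\Theta(r^*)$. For the cells searched within this region, I would charge each cut to one of three types---interior cuts lying wholly inside the active ball, exterior cuts straddling the radius-$r$ boundary, and empty cuts with an empty child---bounding the first by the $O(k)$ nearby points, the second by the $O(1)$ faces of the region, and the third by combining the geometric level bound with the expansion constant so that each empty cell is charged to a bounded number of occupied neighbors. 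This should yield $O(k)$ traversed nodes in expectation.

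Assembling the three parts gives $O(k)$ distance evaluations, at most $O(k)$ queue insertions at $O(\log k)$ each, and $O(k)$ box tests, for an expected total of $O(k\log k)$. I expect the genuine difficulty to lie entirely in (iii): making the random-shift tail bound quantitative enough to control the expected climb, and simultaneously proving that the number of empty cells encountered by the \texttt{searchDown} subcalls is $O(k)$ rather than growing with the tree height---the empty-cell count is exactly where the one-sided expansion condition and the random shift must be played against each other.
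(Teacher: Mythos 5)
Your overall decomposition matches the paper's (Lemma~\ref{lem: numboxes} bounds the candidate points and queue cost; Lemma~\ref{lem: traversalcost} bounds the expected traversal via the random shift), and your geometric tail bound for the climb is exactly the paper's computation $\sum_{j\geq 1}\bigl(1-(1-2^{-j})^d\bigr)=O(1)$. But part (i) as written contains a genuine gap. You assert that \texttt{searchDown} ``prunes any box farther than the current radius $r \le r^*$,'' with the conclusion that every examined point lies within $O(r^*)$ of $p$. The inequality runs the wrong way: the current estimate satisfies $r \ge r^*$ throughout, and, worse, during the initial phase of Algorithm~\ref{algo:searchup}, before $N$ contains $k$ points, the radius is \emph{infinite}, so \texttt{withinBox} prunes nothing and the sibling subtrees encountered on the climb are searched in their entirety. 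Your $\bbox(p, c\,r^*)$ region therefore does not contain the examined points, and iterating the expansion constant around $p$ at scale $r^*$ does not bound them. The paper's fix is a different centering: climb to the first ancestor of $p$'s leaf with more than $k$ descendants; since its child on the path has fewer than $k$ points, bounded expansion forces the ancestor's box $B$ to hold $O(k)$ points, its side length $r$ (which may be far larger than $r^*$) then caps the search radius, and two further doublings of $B$ capture all candidates with only a $\gamma^2$ blowup. Note this step needs no randomness at all. Relatedly, your parenthetical about a leaf ``extending the ball by its own diameter'' does not dispose of large leaves straddling the search region: the paper handles these by observing that a straddling leaf must have radius exceeding $r$, that only $O(1)$ such quad-tree boxes exist in constant dimension, and that each holds at most $k$ points by the standing assumption on leaf sizes.

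On part (iii), your instinct about where the difficulty lies is half right but your proposed mechanism is misplaced. The interior/exterior/empty cut taxonomy you import from Figure~\ref{fig: theoryfig}(c) belongs to the batch-update analysis (Theorem~\ref{thm: batchupdate} via Lemmas~\ref{lem: interiorcuts} and~\ref{lem: exteriorcuts}), where it controls weight balance; it plays no role in the query bound. The paper's Lemma~\ref{lem: traversalcost} dispatches the within-region traversal far more cheaply: the search region of side $2^h$ is covered by $2^d$ quad-tree boxes $Q_i$ of at most twice its side length, each containing $O(k)$ points, so all edges below that level are charged to the $O(k)$ searched points; the only quantity needing the random shift is the number of levels climbed \emph{above} the natural level, which is the event that the randomly shifted dyadic grid splits the region, giving the geometric series above. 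Your worry about empty-cell chains inflating the node count is also resolved structurally rather than probabilistically: the construction never forks off an empty child (an empty cut simply advances to the next bit), so subtree size is linear in the number of nonempty leaves, and the analysis may even assume a true quad-tree since that only overcounts. With the recentered point bound from Lemma~\ref{lem: numboxes} substituted for your $r^*$-based one, your outline assembles into the paper's proof.
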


The proof of the theorem separates the work into two parts: the work of searching through the points in a leaf of the zd-tree, and the work of traversing the zd-tree to get to those leaves. The first lemma concerns the former. 

\begin{lemma}\label{lem: numboxes}
When searching for the k-nearest neighbors of a point $p$, $O(k)$ candidate points will be considered, resulting in $O(k \log k)$ work to evaluate all the candidate points.
\end{lemma}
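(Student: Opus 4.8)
The plan is to prove Lemma~\ref{lem: numboxes} by bounding the number of candidate points examined during the query using the bounded expansion constant, and then accounting for the priority-queue maintenance cost separately.

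First I would establish the key geometric fact: when the search for the $k$-nearest neighbors of $p$ terminates, the final search radius $r$ equals the distance from $p$ to its $k$-th nearest neighbor. By definition, $\bbox(p, r)$ contains exactly $k$ points (or the $k$ nearest, with ties broken appropriately). The search algorithm only examines points inside leaves whose bounding boxes intersect the ball of radius $r$ around $p$, and since these leaves are disjoint and each contains $O(k)$ points by assumption, the total number of candidates is governed by how many such leaves exist within distance $r$ of $p$. I would argue that all examined candidate points lie within some bounded multiple of $r$ of $p$ — concretely, within $\bbox(p, cr)$ for a constant $c$ depending only on the dimension $d$, since a leaf intersecting the radius-$r$ ball cannot have points arbitrarily far from $p$ once we account for the leaf's diameter relative to $r$.

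The heart of the argument is then to invoke Definition~\ref{def: expansion}. Since $|\bbox(p, r)| = k$, repeated application of the bounded expansion constant gives $|\bbox(p, 2^j r)| \leq \gamma^j k$. Choosing $j = \lceil \log_2 c \rceil = O(1)$ (a constant, since $c$ and $d$ are constants), we obtain $|\bbox(p, cr)| \leq |\bbox(p, 2^j r)| \leq \gamma^j k = O(k)$, because $\gamma = O(1)$ and $j = O(1)$. This shows that only $O(k)$ candidate points can possibly be considered during the entire search. Finally, each candidate point, when evaluated, may trigger an insertion into the priority queue $N$, which holds at most $k$ elements; each such insertion costs $O(\log k)$. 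Multiplying the $O(k)$ candidates by the $O(\log k)$ per-candidate cost yields the claimed $O(k \log k)$ work to evaluate all candidates.

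The main obstacle I anticipate is the geometric step that bounds all examined candidates within $\bbox(p, cr)$ for a constant $c$. A leaf box that merely touches the radius-$r$ ball could in principle be large, placing some of its points outside radius $r$; I need to control the side length of any leaf that the search enters relative to $r$. The natural way to handle this is to note that the search prunes any box not intersecting the radius-$r$ ball, and that the bounded ratio and quad-tree structure limit how the box sizes at the relevant level compare to $r$ — essentially that the boxes reached at the terminal search radius have diameter $O(r)$, so their points lie within $O(r)$ of $p$. Making this constant $c$ explicit and independent of $n$ and $k$, using only the dimension and the expansion/ratio assumptions, is the delicate part; once it is pinned down, the expansion-constant doubling argument closes the bound cleanly.
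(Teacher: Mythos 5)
There are two genuine gaps here, and both sit exactly at the points your last paragraph flags as delicate. First, your choice of radius does not control the algorithm's behavior. You take $r$ to be the \emph{final} distance from $p$ to its $k$-th nearest neighbor and assert the search only enters leaves meeting the ball of radius $r$; but the pruning radius in Algorithms~\ref{algo: naivesearch} and~\ref{algo:searchup} is the \emph{current} estimate, which is infinite until $N$ holds $k$ points and only shrinks toward the final value over the course of the search. Nothing in your argument bounds the work done during this initial phase, when the ball is large and the search could in principle wander widely. The paper closes this hole by anchoring the radius before any expansion argument: walking up from the leaf containing $p$ to the first ancestor with more than $k$ descendants yields a quad-tree box $B$ that, by Definition~\ref{def: expansion}, contains only $O(k)$ points (one of its children has fewer than $k$); after adding those descendants the search radius never exceeds the side length of $B$, so the entire remaining search is confined to $B$ expanded twice, which holds at most $\gamma^2 \cdot O(k) = O(k)$ points. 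Your doubling step via $\gamma^j$ with $j=O(1)$ is the same in spirit as the paper's $\gamma^2$, but it needs this anchor to be applicable.

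Second, the fix you sketch for oversized boundary leaves would fail. You propose to show that every leaf the search enters has diameter $O(r)$, citing bounded ratio and the quad-tree structure; this is not provable from the stated assumptions, since a leaf's extent is governed by its point count (the size cutoff), not by geometry, so a very large, sparsely populated leaf can sit directly beside a dense cluster at scale $r$. The paper's resolution is different and is the actual content of the delicate step: any leaf that straddles the boundary of the expanded region must be a quad-tree box of radius larger than $r$, and since disjoint quad-tree boxes of side exceeding $r$ can intersect an $O(r)$-sized region only a constant number of times when $d=O(1)$, there are $O(1)$ such leaves; combined with the standing assumption (stated just before Theorem~\ref{thm: querytime}) that each leaf holds at most $k$ points, they contribute only $O(k)$ extra candidates. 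So your overall skeleton (expansion-constant doubling plus an $O(\log k)$ priority-queue charge per candidate) matches the paper, but the two steps that make the lemma true --- bounding the search radius throughout the search, and counting rather than shrinking the oversized leaves --- are missing or would not go through as proposed.
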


\begin{proof}
The multiplicative factor of $\log k$ comes from the fact that a priority queue is used to store nearest neighbors, so an $O(\log k)$ cost is incurred each time the priority queue is updated. 

Consider the leaf $L$ of the tree that would contain $p$. The initial approximation is found by recursing up to $L$'s ancestor until the ancestor has more than $k$ descendants, then adding those descendants to the priority queue. The ancestor's bounding box $B$ must contain $O(k)$ points by the fact of bounded expansion constant, since one of its children contains fewer than $k$ points.

Now, let $r$ be the side length of $B$. Our algorithm will search a leaf only if the box belonging to that leaf overlaps with $\bbox(p, r)$. Those neighbors are contained within radius $r$ of the quad-tree box containing our initial guess.

The search algorithm evaluates every point at radius $r$ from $p$ as a candidate point. If the radius $r$ of $B$ is expanded twice, all the candidate points will be contained in the resultant box; call this box of candidates $Q$. If $k' = O(k)$ points are in $B$, the expansion condition guarantees that at most $\gamma^2 k' = O(k)$ points are in $Q$. However, the search algorithm does not directly search each point in $Q$; rather, it searches every leaf whose bounding box overlaps with $Q$. This means that if a leaf $L$'s bounding box were to fall partially inside $Q$ and partially outside, all the points in $L$ would be counted. Since $B$ is a quad-tree box, this could only occur if the leaf $L$ were to have a bounding box with radius larger than $r$. Since the radius is larger than $r$ and $d=O(1)$, there can only be a constant number of such leaves. Since each leaf has at most $k$ points, the original bound still holds.
\end{proof} 

Now, we move on to considering the expected work required to traverse the zd-tree to access the leaves. The probability calculation given here can also be found in~\cite{connor2010knn}.

\begin{lemma}\label{lem: traversalcost}
When traversing the zd-tree, the expected number of tree edges traversed to find all the nearest neighbors is $O(k)$.
\end{lemma}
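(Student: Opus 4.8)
The plan is to bound the total number of tree nodes visited by the upward search, since each traversed edge corresponds to descending into one such node, and the visited nodes form a connected subtree of the zd-tree. A node is visited exactly when its bounding box meets the ball $\bbox(p,r)$ for the current search radius $r$, and by Lemma~\ref{lem: numboxes} the relevant radius is $\Theta(r_B)$, where $r_B$ is the side length of the ancestor box $B$ carrying $\Theta(k)$ points. I would therefore count the leaves and the internal nodes of this visited subtree separately. The leaves reuse the argument of Lemma~\ref{lem: numboxes}: each visited leaf has a box meeting $\bbox(p,\Theta(r_B))$, a region that contains $O(k)$ points by bounded expansion, so there are $O(k)$ of them.

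For the internal nodes I would classify each visited box by its position relative to the ball, matching the interior/exterior/empty cut decomposition illustrated in Figure~\ref{fig: theoryfig}. At a dyadic scale $s$, the number of non-empty boxes intersecting the ball is $O((r_B/s)^{d})$, and since bounded expansion keeps the density roughly uniform, summing this geometrically over the scales from $s=r_B$ down to $s_0 = r_B/k^{1/d}$ (where the bound reaches $\Theta(k)$) gives a total of $O(k)$; this is dominated by the finest of these scales. Boxes straddling the ball boundary (the exterior cuts) are handled by the same geometric sum against the $O((r_B/s)^{d-1})$ surface-area count, again yielding $O(k)$. Both of these contributions are deterministic and cover all of the tree down to scale $s_0$.

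The hard part, and where I expect the randomness to be essential, is the \emph{empty-split} nodes living at scales finer than $s_0$. These can only arise inside a tight sub-cluster of more than $k$ points, and such a cluster may sit at the bottom of a long chain of nested boxes that each isolate it from the query; reaching the cluster through \texttt{searchDown} could in principle traverse an $\Omega(\log n)$ chain, which is exactly what the deterministic geometric sum fails to control. The key is Chan's random coordinate shift, which makes every grid line fall uniformly within its scale: for the query and any fixed point at distance $\rho$, a scale-$s$ cut separates them with probability $O(\rho/s)$ per coordinate, so the expected number of scales $s>\rho$ at which they remain separated — and hence the expected length of the empty-split chain that must be descended to reach that point — is $\sum_{s>\rho} O(\rho/s) = O(1)$. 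Charging this $O(1)$ expectation to each of the $O(k)$ candidate points and summing by linearity of expectation (which is valid even though all points share the single random shift) bounds the expected number of empty-split nodes by $O(k)$, and combining the three types yields the claimed $O(k)$ expected edges. The subtleties I would need to handle carefully are that the search radius $r$ shrinks as nearer points are inserted into $N$, which can only cut chains short and therefore only helps, and that the per-point bounds must be aggregated correctly under the shared shift rather than assuming any independence.
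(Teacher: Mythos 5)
Your mid-scale and coarse-scale accounting is sound and, in substance, close to the paper's: the deterministic packing sum over dyadic scales between $r_B$ and $s_0$ is correct (note it needs no density-uniformity claim from bounded expansion, only disjointness of same-scale boxes), and your per-candidate shift computation $\sum_{s>\rho} O(\rho/s)=O(1)$ is essentially the same calculation the paper performs once, globally, when it bounds the expected number of levels $j$ above the search space before it is contained in a single quad-tree box by $\sum_{j\geq 1}\bigl(1-(1-2^{-j})^d\bigr)=O(1)$. The genuine gap is in the step you yourself flag as the hard part. Your sum only controls cuts at scales \emph{coarser} than the pair distance $\rho$; at every scale finer than $\rho$ the query and the candidate are separated with probability $1$, so the random shift gives no control whatsoever there. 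But the empty-split chains you are trying to bound live exactly in that regime: if the $k$ nearest neighbors of $p$ form a tight cluster of diameter $\delta$ near $d_{\min}$ at distance $\rho \approx r_B$ (a configuration consistent with bounded expansion and bounded ratio), then in the uncompressed quad-tree every box containing the cluster at scales between $\delta$ and $\rho$ intersects $\bbox(p,r)$ with the current radius $r \geq \rho$, so the descent chain has length $\Theta(\log(\rho/\delta)) = \Theta(\log n)$ for \emph{every} value of the shift. The claimed $O(1)$ expected chain length per candidate is therefore not just unproven but false in the model you set up.

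The fix is structural, not probabilistic, and it is what the paper relies on: empty cuts are not forked (the nonempty side is divided using the next bit, as stated before Theorem~\ref{thm: batchupdate}), so in the actual zd-tree every edge of such a descent chain splits off at least one point, and since all points reachable within the search radius lie in the $O(k)$-point region of Lemma~\ref{lem: numboxes}, the entire fine-scale traversal can be charged deterministically to those $O(k)$ points. This is precisely the paper's move of covering the search space by $O(1)$ quad-tree boxes $Q_i$ of at most twice its side length and charging all traversal inside each $Q_i$ to its $O(k)$ points, reserving the randomness solely for the expected $O(1)$ levels above the search space. If you replace your fine-scale shift argument with this charging (keeping your randomness only for scales coarser than $r_B$, where your computation is valid and your observation about linearity of expectation under the single shared shift is correct), the proof goes through.
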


\begin{proof}
The worst case for the cost of the search algorithm is when the search space described in Lemma~\ref{lem: numboxes} is only contained in the bounding box containing all of $P$, as illustrated in Figure~\ref{fig: theoryfig}(a). First, we show that the length of the traversal is bounded by the longest path between two leaves in the search space; all other searches can be charged to the $O(k)$ points that are searched. We will use the box $B$ to refer to the search space. The largest cut of $B$ divides $B$ into $2^d$ quad-tree boxes. Each of these boxes must be contained within a quad-tree box $Q_i$ of at most twice the side length of $B$. Thus, traversing every leaf in $Q_i$ incurs cost at most $O(k)$; since $d$ is constant, the claim follows.

All that remains is compute the expectation of the length of the longest path in the zd-tree. Without loss of generality, assume that the search space is a box with side length $2^h$. Then the probability that the search space is contained within a box of side length $2^{h+j}$ is $\left(\frac{2^j-1}{2^j}\right)$, since the box must have its upper left corner in one of $2^j-1$ grid squares along each dimension. Thus, due to the random shift, the probability that the search space is NOT contained within a box of side length $2^{h+j}$ is $1-\left(1-\frac{1}{2^j}\right)^d$. From the perspective of traversing the zd-tree, this is the event that the path between two leaves in the search space is length $j$. Thus its expectation can be upper bounded by the following summation, which charges a cost of one for each box the search space is not contained in: 
\begin{align*}
\sum_{j=1}^\infty \left(1 - \left(1-\frac{1}{2^j} \right)^d \right) = O(1)
\end{align*}
and the result follows.
\end{proof} 

Lemmas~\ref{lem: numboxes} and~\ref{lem: traversalcost} together compose the proof of Theorem~\ref{thm: querytime}.

Now we move on to batch-dynamic updates. In a weight-balanced tree, the argument for the desired $O(k \log (n/k))$ bound would be as follows: when a batch of points is inserted into the tree, the work required to insert them into $\log k$ levels can be no more than the work that would be required to build them into their own tree. Thus insertion into the first $O(\log k)$ levels of the tree uses $O(k)$ work, and the work bound on an insertion is $O(k(\log n - \log k)) = O(\log (n/k))$. 

Hence the goal of Theorem~\ref{thm: batchupdate} is to show that in addition to the traditional notion of balance, the zd-tree also obeys some notion of weight balance---that is, that each split of a point set must produce two halves where each contains a constant fraction of the points. Unfortunately, this is not strictly true, since a set of points with bounded expansion may, for example, have only one element with a 1 at the largest bit if that element is very close to the rest of the elements. However, we will be able to show a slightly weaker notion than weight balance: that enough nodes in the tree are weight-balanced that the same work bound still applies. 

One more piece of terminology is needed before the theorem statement. When building the zd-tree by successively splitting the bounding box of the input, a split may have no points in $X$ on one side. During the tree building phase, we face a choice regarding empty cuts: when the algorithm makes an empty cut, we could either fork off two child nodes where one is a leaf containing no points, or we could simply not fork off an empty node, and divide the other node using the next bit. Since it is strictly cheaper, we choose the latter. Thus the following analysis will not deal with empty cuts; in particular, Lemmas~\ref{lem: exteriorcuts} and~\ref{lem: slicedensity} do not consider empty cuts in their analysis, since empty cuts do not affect the length of paths in the tree.

\begin{theorem}\label{thm: batchupdate}
Let $T$ be a pruned zd-tree representing point set $P$, and let $Q$ be a point set of size $k$, such that $|P|+|Q|=n$. Then if $P \cup Q$ and $Q$ both have bounded expansion and bounded ratio in the same hypercube $X$, $Q$ can be inserted into $T$ in $O(k\log(n/k))$ work and $O(k^{\epsilon} + \polylog(n))$ span. 
\end{theorem}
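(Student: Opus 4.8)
The plan is to reduce the analysis of Algorithm~\ref{algo: batchdynrec} to bounding the size of the subtree of $T$ that the recursion actually visits, together with the cost of the leaf operations at its fringe. Observe that the recursion touches exactly those nodes of $T$ lying on a root-to-leaf path of some point of $Q$; each internal node it visits costs one binary search in \texttt{splitUsingBit}, and each leaf it reaches costs either a constant-time append or a local rebuild. So the work splits as (i) the number of visited internal nodes times the per-node search cost, plus (ii) the total cost of rebuilding overflowing leaves. First I would follow the weight-balanced template sketched before the theorem: the visited nodes lying in the top $O(\log k)$ ``levels of real progress'' can be charged against building $Q$ into its own tree, which costs $O(k)$ work by Theorem~\ref{thm: treebuild} (using that $Q$ has bounded ratio); below that, each of the $O(k)$ separated sub-batches descends a path, and the goal is to show each such path has length $O(\log(n/k))$ rather than $O(\log n)$.

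The crux is that the zd-tree is not literally weight balanced, so I cannot claim that every $\log k$ levels reduce the point count by a $2^{\log k}$ factor. The hard part will be to recover a weak substitute. I would classify each (nonempty) cut as \emph{interior} --- both sides keep a constant fraction of the points --- or \emph{exterior} --- one side is left with only a small residue near the boundary of the occupied region (the three cut types are illustrated in Figure~\ref{fig: theoryfig}(c)). Interior cuts are the ones that make geometric progress: $O(\log(n/k))$ of them suffice to drive a box of $n$ points down to subtrees of size $O(n/k)$. The role of bounded expansion is to control the exterior cuts: via Lemma~\ref{lem: slicedensity} and Lemma~\ref{lem: exteriorcuts} I would argue that a run of consecutive exterior cuts cannot be long, because the expansion constant $\gamma$ forbids the point density from concentrating arbitrarily against a boundary, and via Lemma~\ref{lem: interiorcuts} that the interior cuts alone determine the effective depth. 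Summing, the visited subtree has $O(k)$ nodes in its branching top part and $O(k)$ paths of length $O(\log(n/k))$ below, giving $O(k\log(n/k))$ visited nodes and hence $O(k\log(n/k))$ work for part~(i); part~(ii) is dominated by the same bound, since the rebuilt leaves contain $O(k)$ points in total.

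For the span, I would use that Algorithm~\ref{algo: batchdynrec} forks its two recursive calls in parallel, so the span is at most the longest insertion path, $O(\log n)$ by Theorem~\ref{thm: treebuild}, times the per-node search span $O(\log n)$, contributing an $O(\polylog(n))$ term; the initial radix sort of $Q$ and any leaf rebuild each build a structure on $O(k)$ points, which by Theorem~\ref{thm: treebuild} has span $O(k^{\epsilon})$, yielding the $O(k^{\epsilon} + \polylog(n))$ bound. I expect the main obstacle to be the exterior-cut analysis: the naive height bound only gives $O(k\log n)$ work, and closing the gap to $O(k\log(n/k))$ --- which matters precisely when $k$ is a large fraction of $n$ --- requires showing that unbalanced cuts are both short-lived and harmless, i.e.\ that enough nodes are weight balanced that the separated sub-batches really do sit at effective depth $O(\log k)$ with only $O(\log(n/k))$ further descent. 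Making ``enough nodes are weight balanced'' quantitatively match the charging argument is the delicate step.
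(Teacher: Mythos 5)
Your overall architecture is the paper's: charge insertion through the weight-balanced part of the tree against the cost of building $Q$ into its own tree ($O(k)$ by Theorem~\ref{thm: treebuild}, using that $Q$ has bounded ratio), classify cuts as in Figure~\ref{fig: theoryfig}(c), invoke Lemma~\ref{lem: interiorcuts} for interior cuts, and handle exterior cuts via Lemmas~\ref{lem: exteriorcuts} and~\ref{lem: slicedensity}. But the step you yourself flag as delicate contains a genuine gap: you propose to show that ``a run of consecutive exterior cuts cannot be long.'' That claim is false in general and is not what the paper proves. By Lemma~\ref{lem: exteriorcuts}, a point $p$ at distance $f(p)$ from the boundary of $X$ can suffer $\Theta(\log(n/f(p)))$ exterior cuts before its cuts become interior, and for points at distance comparable to $d_{\min}$ this is $\Theta(\log n)$; bounded expansion does not prevent individual points from sitting arbitrarily close to the boundary, only too many of them. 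Relatedly, you have conflated definition with conclusion: in the paper, interior/exterior is a purely geometric classification (whether the cutting plane meets the boundary of $X$ within the current cell), and the balance of interior cuts is then a \emph{lemma} --- the box argument in Lemma~\ref{lem: interiorcuts} requires placing a box center $x$ inside $X$, which is exactly the step that fails for exterior cuts and forces the separate treatment.

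The paper's actual resolution is a population-versus-depth tradeoff rather than a run-length bound: it concedes that an exterior path may be unbalanced for its full length $\ell = \log(n/r)$, but by Lemma~\ref{lem: slicedensity} at most a $\left(\frac{\gamma^2}{\gamma^2+1}\right)^{\log(n/r)}$ fraction of the $k$ inserted points lie within distance $r$ of a given face, so the worst-case extra cost per face is bounded by $k \int_0^{\infty} \left(\frac{\gamma^2}{\gamma^2+1}\right)^x x \, dx = O(k)$, which is absorbed into the main $O(k\log(n/k))$ term. Your plan needs exactly this charging computation --- number of affected points times their path length, summed over depths --- and no strengthening of the short-run claim can substitute for it, since the runs genuinely are long for boundary-hugging points. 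Your span accounting (the sort and leaf work contributing $O(k^{\epsilon})$, the tree depth contributing the $\polylog(n)$ term) is consistent with the paper, which delegates the sorting bound to Theorem~\ref{thm: treebuild}.
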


When $X$ in its bounding box $B$ is being recursively divided using Algorithm~\ref{algo: treebuild}, it will be useful to separate the divisions or cuts into several categories. A cut along dimension $d$ divides some sub-cube $S$ of $B$ in two with cutting plane $\ell$. The cut is either \textbf{empty}, meaning that $\ell$ does not touch any points in $X$; or it is \textbf{exterior}, meaning it touches the boundary of $X$; or it is \textbf{interior}, meaning that within $S$, $\ell$ does not touch any points on the boundary of $X$. See Figure~\ref{fig: theoryfig}(c) for an illustration.

The first step towards Theorem~\ref{thm: batchupdate} is to show that interior cuts are weight balanced.

\begin{lemma}\label{lem: interiorcuts}
One out of every $d$ interior cuts must be weight balanced; that is, it must split its bounding box into sets of size $\alpha n$ and $(\alpha-1)n$ for constant $\alpha \in (0,1)$.
\end{lemma}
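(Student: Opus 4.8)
The plan is to argue by contradiction: I assume that $d$ consecutive interior cuts, cutting cyclically along dimensions $1,\dots,d$, are all unbalanced, and derive a violation of the bounded expansion constant. Write $S$ for the box being cut, $s$ for its side length, $c$ for its center, and $m$ for the number of points it contains; conceptually the $d$ cuts subdivide $S$ into $2^d$ congruent sub-cubes of side $s/2$. Since each cut is unbalanced, the heavy child at every step retains more than a $(1-\alpha)$ fraction of its parent's points, so following the heavy children down to a single corner sub-cube $S^\ast$ leaves at least $(1-\alpha)^d m \ge (1-d\alpha)m$ points inside $S^\ast$ and at most $d\alpha\,m$ points in the rest of $S$ (here I read the statement's ``$(\alpha-1)n$'' as the intended $(1-\alpha)n$).

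First I would use bounded expansion to show that the mass in $S^\ast$ is forced toward the center of $S$. Because each of the $d$ cutting planes is interior, the center $c$ lies in $X$, and $\bbox(c, s/2) = S$, so Definition~\ref{def: expansion} gives $|\bbox(c, s/4)| \ge m/\gamma$ for the central box of side $s/2$. This central box meets $S^\ast$ only in the octant $D$ that touches $c$, and meets the sparse complement in its remaining octants; subtracting the at-most-$d\alpha\,m$ sparse points shows that $D$ alone contains at least $m/\gamma - d\alpha\,m$ points, which is at least $m/(2\gamma)$ once $\alpha$ is taken below $1/(2\gamma d)$. Thus the small box $D$, of side $s/4$ and sitting in the inner corner of $S^\ast$, is dense.

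The contradiction then comes from centering a box in the sparse region diagonally opposite $S^\ast$ and expanding it to engulf $D$. Concretely, I would anchor a box in the antipodal corner region (which belongs to the sparse complement), observe that a constant number of radius-doublings produces a box containing all of $D$, and apply bounded expansion across those doublings. This bounds the number of points in $D$ by $\gamma^{O(1)}$ times the sparse count, i.e.\ by $\gamma^{O(1)} d\alpha\,m$, which together with $|D| \ge m/(2\gamma)$ forces $\alpha \ge 1/(d\,\gamma^{O(1)})$. Choosing the constant $\alpha$ below this threshold yields the desired contradiction, so at least one of the $d$ cuts must split its box into fractions $\alpha$ and $1-\alpha$.

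I expect the main obstacle to be the boundary and emptiness cases in the two applications of bounded expansion, since Definition~\ref{def: expansion} only constrains boxes whose inner copy already contains more than one point. To center the final box I need a non-empty sparse region to anchor on; here I would lean on the fact that an interior cut, unlike an empty cut, has points on both sides, so each of the $d$ interior cuts certifies points on its light side near the cutting plane, furnishing a non-empty sparse box near $c$ to expand from. Making the placement of this anchor precise (so that its $O(1)$-fold expansion provably contains the dense octant $D$ while its inner copy stays inside the sparse complement), and tracking the exact dependence of $\alpha$ on $\gamma$ and $d$, are the routine-but-delicate parts I would need to nail down; the geometric heart of the argument is simply that bounded expansion cannot tolerate a dense central octant sitting next to an essentially empty opposite corner.
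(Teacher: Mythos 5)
Your overall strategy---derive a contradiction with Definition~\ref{def: expansion} by exhibiting a box with few points whose doubling engulfs a dense region---is indeed the geometric heart of the paper's argument, but your two concrete steps both fail in exactly the extreme case the lemma must handle, namely when the light sides of the unbalanced cuts are nearly empty. First, your centering step runs the expansion condition backwards: from $|\bbox(c,s/2)|=m$ you conclude $|\bbox(c,s/4)|\ge m/\gamma$, but Definition~\ref{def: expansion} only constrains boxes whose \emph{inner} copy contains $k>1$ points, so if the mass of $S^\ast$ hugs the corner away from $c$ and $\bbox(c,s/4)$ holds at most one point, the step yields nothing and you offer no fallback; the ``dense octant'' $D$ may simply be empty. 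Second, and more seriously, your anchor in the antipodal corner must be a box containing at least two points, all of them sparse, whose $O(1)$ doublings cover $D$. When the sparse complement holds only a handful of points (say, one point per light side, scattered arbitrarily---possibly flush against $S^\ast$), any box containing two of them can be comparable in size to $S$ and may already swallow the dense core, so its count is not bounded by $O(d\alpha m)$ and no contradiction follows. Your proposed repair---that an interior cut ``certifies points on its light side''---misreads the paper's taxonomy: empty, exterior, and interior classify the cutting plane's relation to the \emph{space} $X$, not to the point set $P$, so an interior cut's light side can be entirely point-free.

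The paper's proof contains precisely the device you are missing. After $d$ unbalanced splits it takes the dense box $B$ (longest side $2^j$) together with the sparse sibling region $Q$ produced by the most recent split at the matched scale $2^{j+1}$, lets $q\in B$ be the unique point of $B$ closest to $Q$, and centers the test box $B_q=\bbox(x,2^j)$ at a point $x\in Q$ chosen so that $B_q$ contains $q$, no other point of $B$, and otherwise overlaps only $Q$. The population of $B_q$ is then at most $1+|Q|<\frac{1}{2(\gamma+1)}n+1$ \emph{regardless of where, or whether, sparse points exist}, while by adjacency at matched scale a single doubling $\bbox(x,2^{j+1})$ contains all of $B$, which holds at least $\left(1-\frac{1}{2(1+\gamma)}\right)^d n$ points; this is what makes the contradiction go through with no populated anchor and no dense central octant (interiority is used exactly to guarantee $x\in X$ so that the definition applies, as the paper notes when explaining why the argument fails for exterior cuts). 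Until you build in an analogue of this ``one dense point plus only-sparse-points'' test box, the parts you label routine-but-delicate are not routine---they are the crux, and as written your plan fails on the configurations that make the cuts most unbalanced.
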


\begin{proof}
Consider a point $p \in P$. As the zd-tree is built, the point set $P$ is split into smaller pieces along each dimension. Call a split \textbf{unbalanced} if it splits $P$ into pieces of size $n_1, n_2$ such that one of $n_1, n_2 < \frac{1}{2(1+\gamma)}n$. Refer to a split as ``involving" $p$ if it splits a box containing $p$. We will show that after $d-1$ unbalanced splits involving $p$, the next split must be balanced.

Assume for contradiction that there are $d$ consecutive unbalanced splits involving $p$. Let $B$ be the quad-tree box containing $p$ after those splits, and let the length of $B$'s longest side be $2^j$. By the assumption that $d$ consecutive unbalanced splits have already happened, there must be some side of $B$ where the most recent split on that side was of a region with maximum side length $2^{j+1}$; call the hyperrectangle resulting from that split $Q$. Let $q \in B$ be the unique closest point to $Q$, as shown in Figure~\ref{fig: theoryfig}(b). Then, consider any $x \in Q$ such that $B_q = \bbox(x, 2^j)$ contains $q$ and no other point in $B$, and overlaps only $B$ and $Q$. Due to its proximity to $B$, $\bbox(x, 2^{j+1})$ must completely contain $B$. By our assumption, $B_q$ contains fewer than $\frac{1}{2(\gamma+1)}n+1$ points, since it contains one point from $B$ and otherwise only points from $Q$, which has at most $\frac{1}{2(\gamma+1)n}+1$ points by our assumption. The box $B$ contains at least $\left( 1 - \frac{1}{2(1+\gamma)}\right)^d n$ points, so since $d \geq 2$, the expansion constant is violated and we reach a contradiction.
\end{proof}

While interior cuts are easily shown to be balanced, the same argument does not hold for a sequence of exterior cuts, since Lemma~\ref{lem: interiorcuts} relies on being able to choose a certain point $x$ as the center of a box, and this point might not be included in $X$ if some of the $d$ unbalanced cuts were exterior. Since an arbitrarily long sequence of nodes formed from exterior cuts might not be weight-balanced, we take a different approach: showing that even if we have to pay the maximum possible cost for each unbalanced path, the number of points on such an unbalanced path is small enough that the overall bound is unchanged. 

The first step towards this goal is to quantify, for a given point $p \in P$, how many exterior cuts involving $p$ must be made before the first interior cut involving $p$.

\begin{lemma}\label{lem: exteriorcuts}
Normalize the length of $B$ to $n$. Then for every point $p \in P$, let $f(p)$ denote the minimum distance from $p$ to the boundary of $X$, perpendicular to some side of the bounding box $B$. Then $O\left(\log \frac{n}{f(p)}\right)$ cuts will be made before the next cut containing $p$ is interior.
\end{lemma}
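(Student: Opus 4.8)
The plan is to translate the combinatorial claim about cuts into a geometric claim about the side length of the sub-box containing $p$, and then count halvings. The guiding observation is a clean sufficient condition for interiority: if the sub-box $S$ that currently contains $p$ lies entirely in the interior of $X$, then the cutting plane restricted to $S$ cannot meet $\partial X$, so that cut is interior (and, since $S$ only shrinks thereafter, so are all later cuts involving $p$). Hence it suffices to upper bound the number of cuts made before $S$ first satisfies $S \subseteq \mathrm{int}(X)$; the first truly interior cut can only come earlier.

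First I would relate the side length of $S$ to $f(p)$. Since $f(p)$ is the minimum \emph{perpendicular} distance from $p$ to $\partial X$, the point $p$ sits at axis-aligned distance at least $f(p)$ from every face of $X$. Therefore, if every side of $S$ has length strictly less than $f(p)$, then $S$—which contains $p$ and extends at most its own side length from $p$ in each coordinate direction—cannot reach any face of $X$, so $S \subseteq \mathrm{int}(X)$. This gives a concrete threshold: as soon as the largest side of $S$ drops below $f(p)$, the next cut involving $p$ is interior.

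Next I would count cuts. Because the Morton order cuts the $d$ dimensions in round-robin fashion, every $d$ consecutive cuts (empty or not) halve each side of the current box once; as $B$ has side length $n$, after $m$ cuts the largest side of $S$ equals $n/2^{\lfloor m/d\rfloor}$. Solving $n/2^{\lfloor m/d\rfloor} < f(p)$ yields $m = O\!\left(d\log\frac{n}{f(p)}\right) = O\!\left(\log\frac{n}{f(p)}\right)$ since $d = O(1)$. Empty cuts are skipped in the tree but still shrink $S$ and advance the dimension rotation, so they can only hasten reaching the interior; hence the number of non-empty (i.e.\ exterior) cuts made before the first interior cut is at most the total cut count, namely $O\!\left(\log\frac{n}{f(p)}\right)$.

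The step to pin down carefully is the sufficient condition for interiority: I must verify that a box of side length below $f(p)$ containing $p$ genuinely avoids $\partial X$ in \emph{all} coordinate directions, which is exactly where the choice of $f(p)$ as the minimum over dimensions of the perpendicular distance is used (so $p$ is at least $f(p)$ from each face). The rotation-halving bookkeeping and the remark that empty cuts only help are routine. One edge case worth a sentence is $f(p) \to 0$ for points extremely close to $\partial X$: the bound degrades gracefully as $\log\frac{n}{f(p)}$ grows, which is precisely the per-point cost that the subsequent density lemma will aggregate.
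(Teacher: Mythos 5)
Your proposal is correct and takes essentially the same route as the paper's proof: a sufficient condition that the quad-tree box containing $p$ be smaller than $f(p)$ (the paper states this via the box's radius, you via its side length), combined with the observation that the box is halved once per $d$ round-robin cuts, yielding $d \cdot \log(n/f(p)) = O(\log(n/f(p)))$ since $d = O(1)$. Your write-up is merely more explicit than the paper's two-sentence argument, in particular in verifying the interiority threshold against all faces of $X$ and in noting that empty cuts only accelerate the shrinkage.
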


\begin{proof}
A cut involving $p$ is guaranteed to be interior when the quad-tree box containing $p$ has radius less than $f(p)$. Since the radius of the quad-tree box is halved every $d$ cuts and side length of $B$ is $n$, the aforementioned condition is met after $d \cdot \log(n/f(p))$ cuts.
\end{proof}

Lemma~\ref{lem: exteriorcuts} gives us a way to bound the number of exterior cuts along the path to $p$. The next step is to bound the \textit{number} of points that can be a given distance or closer to the boundary of one of the faces.

\begin{lemma}\label{lem: slicedensity}
Normalize the side of $B$ to $n$. Let $S$ be a subset of $B$ formed by cutting $B$ parallel to one of its faces at distance $r$ away from the face. Then at most a $\left(\frac{\gamma^2}{\gamma^2+1} \right)^{\log n/r}$ fraction of the total points in $B$ are contained in $S$.
\end{lemma}

\begin{proof}
One way of upper bounding the number of points in $S$ is as follows: $S$ can be formed by dividing the bounding box in half along one dimension $\log(n/r)$ times. On each division, at most how many points can be in the resultant rectangle? Consider the first cut which divides the bounding cube in half; the goal is to maximize the number of points in one half without violating the expansion constant. The ``sparse" half can be separated into $2^{d-1}$ hypercubes of radius $n/2$, each of which is expanded twice before the whole space $X$ is encompassed. Note that it is optimal for all the ``sparse" sub-cubes to contain the same amount of points since each must be able to expand a box around its center. The following equation solves for the largest fraction $f$ of the total points that can be in any one sub-cube without violating the expansion constant.
\begin{align*}
\gamma^2 \frac{1-f}{2^{d-1}-1} \geq x + (2^{d-1}-2)\frac{1-f}{2^{d-1}-1} \\
\implies f \leq \frac{\gamma^2-2^{d-1}+2}{\gamma^2+1}.
\end{align*}
The following cuts need to be upper bounded in a slightly different way, since they are cutting a hyperrectangle with $d-1$ sides of length $n$ and one side of length $s$. Thus the area can be decomposed into $2n/s$ boxes of side length $s/2$. Half the boxes will receive the maximum number of points possible; both halves will evenly distribute their points across the boxes. Each ``sparse" hypercube of side length $s/2$ must be expanded twice before it contains a cube $C$ of side length $s$ that is completely contained within the region. For that cube $C$, the same equations as before can be written to bound the number of points in the dense half of $C$ to a $\frac{\gamma^2-2^{d-1}_2}{\gamma^2+1}$ fraction of the total points in $C$. The total number of points in $C$ is upper bounded by $\left(\frac{\gamma^2-2^{d-1}_2}{\gamma^2+1} \right)^{k}$ where $k$ is the total number of cuts that have occurred, assuming that the maximum possible fraction is on one side each time. The overall bound follows.
\end{proof}

Now, we can put all these pieces together to prove the theorem.

\begin{proof}[Proof of Theorem~\ref{thm: batchupdate}]
	
The sorting cost bound follows directly from Theorem~\ref{thm: treebuild}.

For the update bound, we know that over all weight-balanced paths in the tree, the cost of insertion the $k$ points down those paths is $O(k \log (n/k))$. Thus our task is to account for the paths in the tree that are not weight-balanced. In the worst case, for every non-weight-balanced path of length $\ell$, we incur an $O(\ell)$ cost for each point that traverses it. We will show that the number of points in $Q$ that are distributed among the unbalanced parts of the tree is small enough that the overall bound is unchanged. Consider one face $S$ of $X$. For a point $p$ at a given length $r$ away from the boundary of $x$ perpendicular to $S$, the path traveled from the root to $p$ can encounter $O(\log (n/r))$ unbalanced nodes. Consider all points at a distance $r$ or closer to $S$. Lemma~\ref{lem: slicedensity} shows that there are at most $k \cdot\left(\frac{\gamma^2}{\gamma^2+1} \right)^{\log (n/r)}$ such points. The cost incurred for each depth is also $\log (n/r)$. Thus, the maximum cost for all the points at depth $f(n)$ or smaller is $k\left(\frac{\gamma^2}{\gamma^2+1} \right)^{\log (n/r)} \log (n/r)$. Renaming $\log (n/r)$ as the variable $x$ and integrating over values of $x$ from $0$ to $n/2$ gives:
\begin{align*}
	k \int_0^{n/2} \left(\frac{\gamma^2}{\gamma^2+1}\right)^x x \; dx \\
	< k \int_0^{\infty} \left(\frac{\gamma^2}{\gamma^2+1}\right)^x x \; dx \\
	= O(k).
\end{align*}
Since the number of faces is constant, the overall bound follows. This shows that even in the worst case where every exterior cut is unbalanced and the maximum number of points are distributed in nodes formed from exterior cuts, the extra work incurred does not change the overall bound.
\end{proof}

%
%
%

We conclude the section with a note on the tradeoff between work and span. The versions of the theorems given in this section have a span that is greater than polylogarithmic due to the radix sort. If the algorithms used a comparison sort, they could run in polylogarithmic span at the cost of needing $O(n\log n)$ work for the sort.

\section{Implementation Details}\label{sec: other_impls}

In this section we give more details on the practical implementation of both our algorithms and the other algorithms we use to benchmark our code.

\subsection{Our Algorithms.}\label{subsec: our_algs}

We implemented our algorithms in C++ using the parallel primitives from ParlayLib~\cite{parlay20}. Our search implementation closely matches the algorithms shown in Section~\ref{sec: algo}, so here we focus mostly on implementation details and other optimizations.

\paragraph{Numerical details.} We work with double-precision floats, which we round to 64-bit integers for building the tree and computing the Morton ordering. 

\paragraph{Miscellaneous optimizations.} We mention a few  optimizations that made significant differences in our runtime. Whenever possible, we used squared distances instead of Euclidian distances in our computations, which made our code about 10\% faster.  To store the current k-nearest neighbors when traversing the tree we use a vector for small $k$ and the C++ STL priority queue for larger $k$.  The overhead of the vector was significantly less for small $k$, but the linear instead of logarithmic cost dominates for $k > 40$ or so. A third optimization was to sort the sequence of queries using their Morton ordering so that nearby queries in this order access nearby nodes in the tree, thus reducing cache misses. The savings from reducing cache misses more than compensates for the cost of the sort, in some cases decreasing runtime by a factor of two.  This is useful even when querying in parallel since the parallel scheduler processes chunks of the iteration space on the same core.

\subsection{Other Implementations.}\label{subsec: other_impls}

For the purpose of comparison we use three existing implementations of nearest neighbor search: CGAL~\cite{alliez2016cgal}, STANN~\cite{connor2010knn}, and Chan~\cite{chan2006minimalist}. Here we describe some performance issues with their code, and some modifications we made to improve the performance of their code to ensure a fair comparison. An extended version of this discussion can be found in the full version of this paper.

\paragraph{Chan.} Chan's code was fully sequential so we needed to parallelize it. Conceptually this is relatively straightforward since the algorithm just requires using a parallel sort instead of a sequential one, and then running the queries in parallel. Chan's code only searches from the root of his implicit tree. We note that the root-based implementation of our code is significantly faster than Chan's. 

\paragraph{STANN.}  STANN includes both a k-nearest neighbor graph (KNNG) function and a k-nearest neighbor (KNN) function.  The first finds the $k$ nearest neighbors among a set of points, and the second supports a function to build a tree and a separate function to query a point for its $k$ nearest neighbors. They supply a parallel version of KNNG, that was parallelized with OpenMP, and only a sequential version of KNN. Their algorithm did not scale well beyond 16 threads, since it left some components sequential. We therefore updated their code to use the parallel primitives and built-in functions from ParlayLib~\cite{parlay20}; this drastically improved their performance.

\paragraph{CGAL.}  CGAL implements a parallel version of their k-nearest neighbor code using the threading building blocks (TBB)~\cite{TBB}. We use their code directly with no modifications.  We note that their code does not scale well past 16 or so threads. Furthermore, although the code appears to be thread safe, there seems to be contention when there are many threads, thereby slowing them all down. Due to the particularly bad performance beyond 36 threads (which all are on one chip), we only report numbers up to 36 threads. Furthermore, since we observed wildly varying times with higher $k$, we only included times for $k<10$ in our experiments.

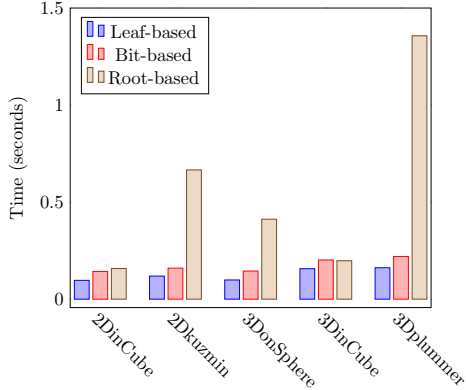
\begin{figure}[t]
\vspace{-.5em}
\begin{center}
  \begin{tikzpicture}[scale = .7]
  \begin{axis}[ylabel  = Time (seconds),
  	legend pos = north west,
    ybar,
    ymax =1.5,
    bar width = 8pt,
    tickwidth         = 0pt,
    symbolic x coords = {2DinCube, 2Dkuzmin, 3DonSphere, 3DinCube, 3Dplummer},
    xticklabel style={
    	anchor=north west,
    	align=left,
    	rotate=-45,
    	inner sep=1pt,
    },
  ]
  \addplot coordinates { 
  	(2DinCube, .097) 
  	(2Dkuzmin, .119) 
  	(3DinCube, .157 )
  	(3DonSphere, .099) 
  	(3Dplummer, .162 )
                  		};
  \addplot coordinates { 
  	(2DinCube, .143 )
  	(2Dkuzmin,  .16 )
  	(3DinCube, .202 )
  	(3DonSphere, .145 )
  	(3Dplummer, .22 )
  };
	\addplot coordinates { 	
		(2DinCube, .158 )
		(2Dkuzmin, .666 )
		(3DinCube, .198 )
		(3DonSphere, .412 )
		(3Dplummer, 1.357 )
	};
  \legend{Leaf-based, Bit-based, Root-based}
  \end{axis}
\end{tikzpicture} 
\end{center}\vspace{-1.5em}
\caption{\small A bar chart showing the performance of our three k-nearest neighbor algorithms on different datasets. All datasets are 10 million points, and times reported are for  $k=1$.}
\label{fig: databarchart}
\vspace{-2em}
\end{figure}
\begin{figure*}[t]
\vspace{-.5em}
	\begin{subfigure}{.66\columnwidth}
%
\begin{tikzpicture}[scale=.65]
		\begin{axis}[
		legend style={font=\footnotesize},
		xmode = log,
		ymode = log,
		xlabel={Size of Dataset},
		ylabel={Time per point (seconds)},
		xmin=0, xmax=100000000,
		ymax=.000004,
		legend pos=north east,
		ymajorgrids=true,
		grid style=dashed,
		]
		
		\addplot[
		color=blue,
		mark = o,
		]
		coordinates {
			(100, 4.5598/10000000)
			(1000, 6.3798/10000000)
			(10000, 2.7526/10000000)
			(100000,.3764 /10000000)
			(1000000,.018/1000000)
			(5000000,.091/5000000)
			(10000000,.178/10000000)
			(100000000,1.681/100000000)
		};
		\addlegendentry{Leaf-based}

		\addplot[
		color=black,
		mark = star,
		]
		coordinates {
		(100, 6.4088/10000000)
		(1000, 4.9645/10000000)
		(10000, 2.7968/10000000)
		(100000,.3778 /10000000)
		(1000000,.021/1000000)
		(5000000,.116/5000000)
		(10000000,.232/10000000)
		(100000000,2.129/100000000)
		};
		\addlegendentry{Root-based}
		
				\addplot[
		color=red,
		mark = triangle,
		]
		coordinates {
			(100, 2.21/1000000)
			(1000, .5700/1000000)
			(10000, .4636/1000000)
			(100000,1.53 /10000000)
			(1000000,.133/1000000)
			(5000000,.622/5000000)
			(10000000,1.251/10000000)
			(100000000,15.035/100000000)
		};
		\addlegendentry{Chan}
		
		\addplot[
		color=green,
		mark = square,
		]
		coordinates {
			(100, 5.7798/10000000)
			(1000, 2.3845/10000000)
			(10000, 1.5866/10000000)
			(100000,.8624 /10000000)
			(1000000,.087/1000000)
			(5000000,.355/5000000)
			(10000000,.729/10000000)
			(100000000,7.889/100000000)
		};
		\addlegendentry{ParlayKNNG}
		
			\addplot[
		color=violet,
		mark = diamond,
		]
		coordinates {
			(100, 8.8377/10000000)
			(1000, 2.6016/10000000)
			(10000, 1.4634/10000000)
			(100000, 1.0277/10000000)
			(1000000, 1.1406/10000000)
			(10000000, 2.1585/10000000)
			(100000000, 24.4608/100000000)
			
		};
		\addlegendentry{CGAL (36 threads)}
		
	\end{axis}
\end{tikzpicture}

		\caption{Time required to calculate nearest neighbors as the size of the dataset increases. Calculated by dividing the total time by the number of points queried.}
	\end{subfigure}~~
	\begin{subfigure}{.66\columnwidth}
%
	
\begin{tikzpicture}[scale=.65]
	\begin{axis}[
		legend style={font=\footnotesize},
		xmode = log,
		ymode = log,
		xlabel={Size of Dataset},
		ylabel={Time per point (seconds)},
		xmin=0, xmax=100000000*1.1,
		ymin=0, ymax=.000004,
		legend pos=north east,
		ymajorgrids=true,
		grid style=dashed,
		]
		
		\addplot[
		color=blue,
		mark = o,
		]
		coordinates {
			(100, 3.4964/10000000)
			(1000,3.7654 /10000000)
			(10000, 2.5280/10000000)
			(100000,.2090 /10000000)
			(1000000,.011/1000000)
			(5000000,.059/5000000)
			(10000000,.118/10000000)
			(100000000,1.08/100000000)
		};
		\addlegendentry{Leaf-based}
		
			\addplot[
		color=black,
		mark = star,
		]
		coordinates {
		(100, 5.3704/10000000)
		(1000, 3.8644/10000000)
		(10000,2.37 /10000000)
		(100000,.2912 /10000000)
		(1000000,.015/1000000)
		(5000000,.078/5000000)
		(10000000,.159/10000000)
		(100000000,1.494/100000000)
		};
		\addlegendentry{Root-based}

		\addplot[
		color=red,
		mark = triangle,
		]
		coordinates {
			(100, 1.6157/1000000)
			(1000, .5279/1000000)
			(10000, .3042/1000000)
			(100000, .8895/10000000)
			(1000000,.052/1000000)
			(5000000,.223/5000000)
			(10000000,.452/10000000)
			(100000000,4.606/100000000)
		};
		\addlegendentry{Chan}
		
			\addplot[
		color=green,
		mark = square,
		]
		coordinates {
			(100, 4.4634/10000000)
			(1000, 1.7875/10000000)
			(10000,1.1190 /10000000)
			(100000,.4342 /10000000)
			(1000000,.029/1000000)
			(5000000,.125/5000000)
			(10000000,.232/10000000)
			(100000000,2.358/100000000)
		};
		\addlegendentry{ParlayKNNG}
		
		\addplot[
		color = violet,
		mark = diamond,
		]
		coordinates{
			(100, 9.1624/10000000)
			(1000,3.2082/10000000)
			(10000,1.6600/10000000)
			(100000,1.2231/10000000)
			(1000000,1.3925/10000000)
			(10000000,2.4450/10000000)
			(100000000,31.1778/100000000)
		};
		\addlegendentry{CGAL (36 threads)}
		
	\end{axis}
\end{tikzpicture}

		\caption{The same as (a) but with a 2D dataset drawn randomly from a square instead of a 3D dataset.}
	\end{subfigure}~~
	\begin{subfigure}{.66\columnwidth}
%
\begin{tikzpicture}[scale=.65]
	\begin{axis}[
		legend style={font=\footnotesize},
		xlabel={Number of Threads},
		ylabel={Work = Threads $\times$ Time},
		xmin=0, xmax=150,
		ymin=0, ymax=230,
		legend pos=north west,
		ymajorgrids=true,
		grid style=dashed,
		]
		
		\addplot[
		color=blue,
		mark = o,
		]
		coordinates {
			(1,9.228)
			(2,4.686*2)
			(8,1.177*8)
			(20,.64*20)
			(40,.305*40)
			(72,.194*72)
			(144,.157*144)
		};
		\addlegendentry{Leaf-based}
		
		\addplot[
		color=black,
		mark = star,
		]
		coordinates {
			(1,14.668)
			(2,7.324*2)
			(8,2.96*8)
			(20,.926*20)
			(40,.445*40)
			(72,.29*72)
			(144,.198*144)
		};
		\addlegendentry{Root-based}
		
		\addplot[
		color=red,
		mark = triangle,
		]
		coordinates {
			(1,97.2300)
			(2,50.5493*2)
			(8,14.0612*8)
			(20,5.5863*20)
			(40,2.9030*40)
			(72,1.8908*72)
			(144,1.2769*144)
		};
		\addlegendentry{Chan}

			\addplot[
		color=green,
		mark = square,
		]
		coordinates {
			(1,48.8569)
			(2,24.7384*2)
			(8,7.1832*8)
			(20,3.1423*20)
			(40,1.7805*40)
			(72,1.0974*72)
			(144,.8080*144)
		};
		\addlegendentry{ParlayKNNG}
		
		\addplot[
		color=violet,
		mark = diamond,
		]
		coordinates {
	(1,10.6273)
	(2,5.7972*2)
	(8,2.8556*8)
	(18,2.1349*18)
	(36,2.1758*36)
	
	};
	\addlegendentry{CGAL}

	\end{axis}
\end{tikzpicture}

		\caption{Total work (threads $\times$ time) required to build a tree of 10 million points, then build the nearest neighbor graph of the point set. Shown as the number of threads vary.}
	\end{subfigure}
	\par\bigskip
	\begin{subfigure}{.66\columnwidth}
%
\begin{tikzpicture}[scale=.65]
	\begin{axis}[
		legend style={font=\footnotesize},
		xlabel={Number of Threads},
		ylabel={Work = Threads $\times$ Time},
		xmin=0, xmax=65,
		ymin=0, ymax=320,
		ytick = {25, 50, 100,200,300},
		legend pos=north west,
		ymajorgrids=true,
		grid style=dashed,
		]
		
		\addplot[
		color=blue,
		mark = o,
		]
		coordinates {
			(1,18.6562)
			(2,9.7191*2)
			(8,2.5124*8)
			(16,1.2879*16)
			(32,.6759*32)
			(64,.4497*64)
		};
		\addlegendentry{Leaf-based}

		\addplot[
		color=black,
		mark = star,
		]
		coordinates {
			(1,32.6442)
			(2,16.6585*2)
			(8,4.3058*8)
			(16,2.2200*16)
			(32,1.2791*32)
			(64,.8769*64)
		};
		\addlegendentry{Root-based}
		
		\addplot[
		color=red,
		mark = triangle,
		]
		coordinates {
			(1,173.5695)
			(2,91.0046*2)
			(8,22.9263*8)
			(16,11.5896*16)
			(32,6.1379*32)
			(64,3.7365*64)
		};
		\addlegendentry{Chan}

			\addplot[
		color=green,
		mark = square,
		]
		coordinates {
			(1,111.6814)
			(2,55.3072*2)
			(8,13.5542*8)
			(16,6.8385*16)
			(32,3.6376*32)
			(64,2.1592*64)
		};
		\addlegendentry{ParlayKNNG}
		
	\end{axis}
\end{tikzpicture}

		\caption{The same measurements as (c), but on the 32-core AMD machine.}
	\end{subfigure} ~~
	\begin{subfigure}{.66\columnwidth}
%
		\begin{tikzpicture}[scale=.65]
			\begin{axis}[
				legend style={font=\footnotesize},
				ymode = log,
				xlabel={$k$ (number of nearest neighbors)},
				ylabel={Time per neighbor (seconds)},
				xmin=0, xmax=100,
				legend pos=north east,
				ymajorgrids=true,
				grid style=dashed,
				]
				
				\addplot[
				color=blue,
				mark = o,
				]
				coordinates {
					(1,.203/10000000)
					(2, .214/20000000)
					(5,.315/50000000)
					(10,.453/100000000)
					(15, .586/150000000)
					(20, .678/200000000)
					(25, .828/250000000)
					(30,.904/300000000)
					(40,1.215/400000000)
					(50,1.562/500000000)
					(60,2.056/600000000)
					(70,2.244/700000000)
					(80,2.422/800000000)
					(90,2.646/900000000)
					(100,2.781/1000000000)
					
				};
				\addlegendentry{Leaf-based}

				\addplot[
				color=black,
				mark = star,
				]
				coordinates {
				(1,.254/10000000)
				(2, .294/20000000)
				(5,.35/50000000)
				(10,.469/100000000)
				(15, .523/150000000)
				(20, .626/200000000)
				(25, .78/250000000)
				(30,.88/300000000)
				(40,1.175/400000000)
				(50,1.655/500000000)
				(60,1.928/600000000)
				(70,2.327/700000000)
				(80,2.453/800000000)
				(90,2.619/900000000)
				(100,2.884/1000000000)
				
				};
				\addlegendentry{Root-based}
				
				\addplot[
				color=green,
				mark = square,
				]
				coordinates {
					(1,.785/10000000)
					(2,.957/20000000)
					(5,1.318/50000000)
					(10,1.711/100000000)
					(15,2.031/150000000)
					(20,2.404/200000000)
					(25,2.715/250000000)
					(30,3.035/300000000)
					(40,3.731/400000000)
					(50,4.359/500000000)
					(60,5.155/600000000)
					(70,5.903/700000000)
					(80,6.809/800000000)
					(90,7.694/900000000)
					(100,8.682/1000000000)
				};
				\addlegendentry{ParlayKNNG}
				
						\addplot[
				color=violet,
				mark = diamond,
				]
				coordinates {
			(1, 2.1758/10000000)
			(2, 2.5704/20000000)
			(5, 3.4485/50000000)
			(10, 4.0751/100000000)	
				};
				\addlegendentry{CGAL (36 threads)}

			\end{axis}
		\end{tikzpicture}

		\caption{Time required to calculate a neighbor as the number of neighbors $k$ increases. Calculated by dividing the total time by $k$ times the number of queries.}
	\end{subfigure}~~
	\begin{subfigure}{.66\columnwidth}
%

\begin{tikzpicture}[scale=.65]
  \begin{axis}[ylabel = Time(seconds),
  	legend pos = north west,
    ybar,
    ymode = log,
    log origin=infty,
    ymax =20,
    tickwidth         = 0pt,
    symbolic x coords = {zd-tree, parlayKNN, Chan05, CGAL},
    xtick = data,
 xticklabel style={
	anchor=north west,
	align=left,
	rotate=-45,
	inner sep=1pt,
},
  ]
  \addplot coordinates { 
	(zd-tree, .0714)
	(parlayKNN, .0635)
	(Chan05, .1901)
	(CGAL, 3.5338)
};
  \addplot coordinates { 
  	(zd-tree, .1007)
  	(parlayKNN, .0667)
  	(Chan05, .2280)
  	(CGAL, 19.7503)
  };

  \legend{3DinCube, 3Dplummer}
  \end{axis}
\end{tikzpicture}

		\caption{The bars represent the total time each algorithm takes to build the data structure, for points drawn randomly from a 3D cube, and points drawn from a Plummer distribution.}
	\end{subfigure}

  \vspace{-.5em}
\caption{\small Statistics related to non-dynamic queries. Unless otherwise stated, the size of the dataset is 10 million, the number of nearest neighbors $k=1$, experiments were performed on 144 threads on a 72-core Dell R930, and data points are drawn randomly from a 3D cube. }
\label{fig: nondynqueries}
\vspace{-1em}
\end{figure*}
\begin{figure}[t]
\vspace{-.5em}
%
\begin{tikzpicture}[scale=.8]
	\begin{axis}[
		xlabel={Number of points added dynamically},
		ylabel={Time per point (seconds)},
		xmode = log,
		ymode = log,
		legend pos=north east,
		ymajorgrids=true,
		grid style=dashed,
		]
		
		\addplot[
		color=blue,
		]
		coordinates {
		(1, 167.6289/5000000)
		(10, 47.0994/5000000)
		(100, 26.5538/5000000)
		(1000, 3.2726/5000000)
		(10000, 1.3916/5000000)
		(100000, .2173/5000000)
		(1000000, .0924/5000000)
		(50000000, .0654/5000000)
		};\addlegendentry{Insertions}
		
			\addplot[
		color=red,
		]
		coordinates {
			(1, 297.8784/5000000)
			(10, 60.7192/5000000)
			(100, 25.6942/5000000)
			(1000, 3.5176/5000000)
			(10000, 1.7022/5000000)
			(100000, .2661/5000000)
			(1000000,.0968 /5000000)
			(50000000, .0682/5000000)
			
		};\addlegendentry{Deletions}

	\end{axis}
\end{tikzpicture}

\caption{\small Time required per point as the number of points in the batch increases. Updates performed on a dataset of 10 million random points inside a 2D cube.}
\label{fig: dynamicupdates}
\vspace{-2em}
\end{figure}
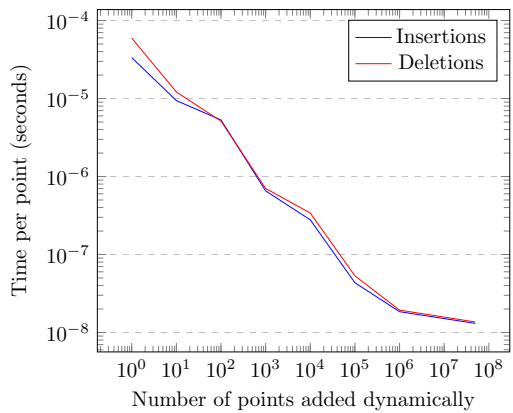
\section{Experiments}\label{sec: experiments}
In this section, we provide experimental results which show that 1) our algorithms perform well under many types of scaling and across different architectures, and 2) our algorithms outperform every implementation we test against.

\subsection{Experimental Setup.}\label{subsec: experimentsetup}

\paragraph{Machines.} We ran most all of our experiments on a 72-core Dell R930 with 4x Intel(\textregistered) Xeon(\textregistered) E7-8867 v4 (18 cores, 2.4GHz and 45MB L3 cache), and 1Tbyte memory. With hyperthreading the total number of threads is 144.
To check whether the results were robust across machines, we also ran one set of experiments on a 4-socket AMD machine with 32 physical cores in total, each running at 2.4 GHz, with 2-way hyperthreading, a 6MB L3 cache per socket, and 200 GB of main memory.

\paragraph{Test data.} After testing our algorithms on some real-world image data, we discovered, similarly to Connor and Kumar~\cite{connor2010knn}, that uniformly random points perform very similarly to real-world datasets.  To facilitate testing at various sizes we therefore use a few distributions of random points in 2 and 3 dimensions, on sizes up to 100 million points. The point sets we used are listed in Figure~\ref{fig: databarchart}. The 2D and 3DinCube datasets are points picked uniformly at random in a square and cube, respectively. The points in the 3DonSphere dataset are selected on the 2D surface of a sphere in 3 space. This is meant to represent various graphics applications where the point sets are on a 2D surface embedded in 3D. The 3Dplummer distribution uses the Plummer model~\cite{AarsethHW74}, which is based on the study of galaxies, and highly dense in at the center, becoming very sparse on the outside.  The 2Dkuzmin distribution is a similarly skewed distribution in two dimensions.

As can be seen from the various statistics, the Plummer and Kuzmin distributions are significantly more skewed than the others. Indeed. these distributions do not have bounded expansion constant. The performance is slower due to the fact that a few small points are extremely far away from most of the points, which are in a dense cluster at the center. This causes the tree to be unbalanced and the searches for the nearest neighbors of these far points to be expensive.  However the overall time is hardly affected by the skewed distribution for our leaf and bit-based implementations (Figure~\ref{fig: databarchart}) showing the algorithms are robust under quite skewed distributions.  As expected, the depth of the trees for the uniform distributions in a cube are just the logarithm of the number of points.

\paragraph{Algorithms tested.} We ran three classes of experiments: (1) generating a $k$-nearest neighbor graph on a set of $n$ points, (2) building a $k$-nearest neighbor query structure on $n$ points followed by dynamic queries on a different set of $n$ points, and (3) batch insertions for a total of $n$ points (after the insertion). The results from (2) can be found in the full version of our paper; they are not significantly different from building the k-nearest neighbor graph.

Altogether we tested 9 variants of the algorithms: our parallelized version of Chan's algorithm, the CGAL algorithm, four variants of STANN (KNN, KNNG, parlayKNN and parlayKNNG), and three variants of our algorithm (leaf-based, root-based, and bit-based). The parlayKNN and parlayKNNG are our modified versions of Connor and Kumar's algorithms. Since our modified versions are always significantly faster, we only report numbers for our versions.

For all implementations, we sort by Morton order before querying. This is to ensure that all algorithms are getting the same benefit of locality in the tree when querying. The experiments on batch insertion only use our algorithm since the others do not support dynamic updates.

\subsection{Leaf vs. Root Based.}\label{subsec: leafvroot} 
In Figure~\ref{fig: databarchart}, we show the performance of our three search algorithms for finding the k-nearest neighbor graph on varying datasets and $k=1$. Figure~\ref{fig: databarchart} shows the same result using a different measurement: the average and maximum number of nodes visited during a query. One takeaway from the figures is that even though the leaf-based method takes $O(n)$ work, and the bit-based method takes $O(n \log n)$ work, the prior is only slightly faster.  This is because the constant in the $O(k \log k)$ term for a search from the leaf is much larger than the constant in the $O(\log n)$ search from the root.  Another takeaway is that for the Kuzmin and Plummer distributions, when starting from the leaf using the root-based algorithm (Algorithm~\ref{algo: naivesearch}) makes an enormous difference.

\subsection{$k$-Nearest Neighbor Graphs.}\label{subsec: knng} 

The results of our experiments for generating the $k$-nearest neighbor graph can be found in Figure~\ref{fig: nondynqueries}.

\paragraph{Varying dataset size.} (Figure~\ref{fig: nondynqueries}(a) and (b).) We measured the total time per point (that is, to build the tree and perform the query) by dividing the total time to build and search by the number of points. As discussed in Section~\ref{sec: other_impls}, we took measures to limit the number of cache misses where possible.

\paragraph{Work efficiency.} (Figure~\ref{fig: nondynqueries}(c) and (d).) Experiments showed that our algorithms performed significantly less work than our competitors as the number of threads increased to 144. To show that we maintain work efficiency on different architectures, we also ran the same experiments on a 32-core AMD machine.

\paragraph{Varying $k$.} (Figure~\ref{fig: nondynqueries}(e).) Our results on varying numbers of neighbors show that our algorithms remain fast and scalable. 

\paragraph{Tree building.} (Figure~\ref{fig: nondynqueries}(f).) To illustrate that the tree-building step itself is efficient (except in the case of CGAL as explained in Section~\ref{subsec: other_impls}), we show time required to build the data structure for the 3DinCube and 3Dplummer distributions.

\subsection{Dynamic Updates.}
\label{subsec: dynupdates} 
We test the efficiency of our batch-dynamic updates by measuring the time required per update as the number of updates in the batch increases. Figure~\ref{fig: dynamicupdates} shows the time taken per point as the size of the batch increases, with both insertions and deletions shown. The figure shows a drastic change in time, spanning almost four orders of magnitude from $10^{-4}$ seconds for a single update to $10^{-8}$ seconds per update for a batch of 5 million. The first period of decrease as the batch size increases to $10^4$ or $10^5$ can be explained by parallelism---this is the point at which the parallel sort and the parallel recursion down the tree begin to save significant time. The fact that time continues to decrease even after the size grows large enough to see the full effects of parallelism can be attributed to the work efficiency of the batch-dynamic updates, as shown in Theorem~\ref{thm: batchupdate}.

\subsection{Code availability.}
Our implementation is part of the publicly available Problem-Based Benchmark Suite~\cite{shun2012pbbs}.
\section{Conclusion}\label{sec: conclusion}
In this work, we presented the zd-tree, a data structure for k-nearest neighbors that combines the ideas of kd-trees and Morton ordering and supports batch-dynamic updates. We showed that the zd-tree is both theoretically efficient and fast in practice, performing well even on data sets which do not have bounded ratio or bounded expansion. One future experimental direction is to experiment with datasets with higher dimensions, or with using our algorithms as a sub-step in calculating nearest neighbors in high dimensions. Another direction worth exploring is to use the zd-tree or a similar data structure for other problems in low-dimensional geometry, such as closest pair or n-body interactions.

\section*{Acknowledgments}
We thank the anonymous referees for their comments and suggestions. This research was supported by NSF grants CCF-1901381, CCF-1910030, and CCF-1919223, and the NSF GRFP.

\small
\bibliographystyle{abbrv}
\bibliography{main.bbl}

\begin{thebibliography}{}

\bibitem[\protect\citename{Aarseth {\em et~al.}, }1974]{AarsethHW74}
Aarseth, S.J., Henon, M., \& Wielen, R. 1974.
\newblock Numerical methods for the study of star cluster dynamics.
\newblock {\em Astronomy and Astrophysics}, {\bf 37}(2), 183--187.

\bibitem[\protect\citename{Agarwal {\em et~al.}, }2016]{agarwal2016parallel}
Agarwal, Pankaj~K., Fox, Kyle, Munagala, Kamesh, \& Nath, Abhinandan. 2016.
\newblock Parallel Algorithms for Constructing Range and Nearest-Neighbor
  Searching Data Structures.
\newblock {\em Pages  429--440 of:} {\em Proceedings of the 35th {ACM}
  {SIGMOD-SIGACT-SIGAI} Symposium on Principles of Database Systems, {PODS}
  2016, San Francisco, CA, USA, June 26 - July 01, 2016}.
\newblock {ACM}.

\bibitem[\protect\citename{Alexa {\em et~al.}, }2001]{alexa2001point}
Alexa, Marc, Behr, Johannes, Cohen{-}Or, Daniel, Fleishman, Shachar, Levin,
  David, \& Silva, Cl{\'{a}}udio~T. 2001.
\newblock Point Set Surfaces.
\newblock {\em Pages  21--28 of:} {\em 12th {IEEE} Visualization Conference,
  {IEEE} Vis 2001, San Diego, CA, USA, October 24-26, 2001, Proceedings}.
\newblock {IEEE} Computer Society.

\bibitem[\protect\citename{Alliez \& Fabri, }2016]{alliez2016cgal}
Alliez, Pierre, \& Fabri, Andreas. 2016.
\newblock {CGAL:} the computational geometry algorithms library.
\newblock {\em Pages  8:1--8:8 of:} {\em Special Interest Group on Computer
  Graphics and Interactive Techniques Conference, {SIGGRAPH} '16, Anaheim, CA,
  USA, July 24-28, 2016, Courses}.
\newblock ACM.

\bibitem[\protect\citename{Anagnostopoulos {\em et~al.},
  }2015]{anagnostopoulos2015low}
Anagnostopoulos, Evangelos, Emiris, Ioannis~Z., \& Psarros, Ioannis. 2015.
\newblock Low-Quality Dimension Reduction and High-Dimensional Approximate
  Nearest Neighbor.
\newblock {\em Pages  436--450 of:} {\em 31st International Symposium on
  Computational Geometry, SoCG 2015, June 22-25, 2015, Eindhoven, The
  Netherlands}.
\newblock LIPIcs, vol. 34.
\newblock Schloss Dagstuhl - Leibniz-Zentrum f{\"{u}}r Informatik.

\bibitem[\protect\citename{Anagnostopoulos {\em et~al.},
  }2018]{anagnostopoulos2018randomized}
Anagnostopoulos, Evangelos, Emiris, Ioannis~Z., \& Psarros, Ioannis. 2018.
\newblock Randomized Embeddings with Slack and High-Dimensional Approximate
  Nearest Neighbor.
\newblock {\em {ACM} Trans. Algorithms}, {\bf 14}(2), 18:1--18:21.

\bibitem[\protect\citename{Arora {\em et~al.}, }2001]{ABP01}
Arora, N.~S., Blumofe, R.~D., \& Plaxton, C.~G. 2001.
\newblock Thread Scheduling for Multiprogrammed Multiprocessors.
\newblock {\em Theory of Computing Systems (TOCS)}, {\bf 34}(2).

\bibitem[\protect\citename{Arya {\em et~al.}, }1994]{arya1994knn}
Arya, Sunil, Mount, David~M., Netanyahu, Nathan~S., Silverman, Ruth, \& Wu,
  Angela~Y. 1994.
\newblock An Optimal Algorithm for Approximate Nearest Neighbor Searching.
\newblock {\em Pages  573--582 of:} {\em Proceedings of the Fifth Annual
  {ACM-SIAM} Symposium on Discrete Algorithms. 23-25 January 1994, Arlington,
  Virginia, {USA}}.
\newblock {ACM/SIAM}.

\bibitem[\protect\citename{Bentley, }1975]{bentley1975multidimensional}
Bentley, Jon~Louis. 1975.
\newblock Multidimensional binary search trees used for associative searching.
\newblock {\em Commun. {ACM}}, {\bf 18}(9).

\bibitem[\protect\citename{Beygelzimer {\em et~al.},
  }2006]{beygelzimer2006cover}
Beygelzimer, Alina, Kakade, Sham~M., \& Langford, John. 2006.
\newblock Cover trees for nearest neighbor.
\newblock {\em Pages  97--104 of:} Cohen, William~W., \& Moore, Andrew~W.
  (eds), {\em Machine Learning, Proceedings of the Twenty-Third International
  Conference {(ICML} 2006), Pittsburgh, Pennsylvania, USA, June 25-29, 2006}.
\newblock {ACM} International Conference Proceeding Series, vol. 148.
\newblock {ACM}.

\bibitem[\protect\citename{Birn {\em et~al.}, }2010]{birn2010simple}
Birn, Marcel, Holtgrewe, Manuel, Sanders, Peter, \& Singler, Johannes. 2010.
\newblock Simple and Fast Nearest Neighbor Search.
\newblock {\em Pages  43--54 of:} {\em Proceedings of the Twelfth Workshop on
  Algorithm Engineering and Experiments, {ALENEX} 2010, Austin, Texas, USA,
  January 16, 2010}.
\newblock SIAM.

\bibitem[\protect\citename{Blelloch {\em et~al.}, }2020a]{BlellochF0020}
Blelloch, Guy~E., Fineman, Jeremy~T., Gu, Yan, \& Sun, Yihan. 2020a.
\newblock Optimal Parallel Algorithms in the Binary-Forking Model.
\newblock {\em In:} {\em {ACM} Symposium on Parallelism in Algorithms and
  Architectures (SPAA)}.

\bibitem[\protect\citename{Blelloch {\em et~al.}, }2020b]{parlay20}
Blelloch, Guy~E., Anderson, Daniel, \& Dhulipala, Laxman. 2020b.
\newblock ParlayLib - A Toolkit for Parallel Algorithms on Shared-Memory
  Multicore Machines.
\newblock {\em In:} {\em {ACM} Symposium on Parallelism in Algorithms and
  Architectures (SPAA)}.

\bibitem[\protect\citename{Blumofe \& Leiserson, }1999]{blumofe1999scheduling}
Blumofe, Robert~D., \& Leiserson, Charles~E. 1999.
\newblock Scheduling multithreaded computations by work stealing.
\newblock {\em J. {ACM}}, {\bf 46}(5), 720--748.

\bibitem[\protect\citename{Callahan \& Kosaraju,
  }1995]{callahan1995decomposition}
Callahan, Paul~B, \& Kosaraju, S~Rao. 1995.
\newblock A decomposition of multidimensional point sets with applications to
  $k$-nearest-neighbors and $n$-body potential fields.
\newblock {\em J. {ACM}}, {\bf 42}(1).

\bibitem[\protect\citename{Chan, }2006]{chan2006minimalist}
Chan, Timothy. 2006.
\newblock {\em A minimalist's implementation of an approximate nearest neighbor
  algorithm in fixed dimensions}.

\bibitem[\protect\citename{Chan, }2008]{chan2008well}
Chan, Timothy~M. 2008.
\newblock Well-separated pair decomposition in linear time?
\newblock {\em Inf. Process. Lett.}, {\bf 107}(5), 138--141.

\bibitem[\protect\citename{Chen {\em et~al.}, }2019]{chan2019fast}
Chen, Yewang, Zhou, Lida, Tang, Yi, Singh, Jai~Puneet, Bouguila, Nizar, Wang,
  Cheng, Wang, Hua{-}zhen, \& Du, Ji{-}Xiang. 2019.
\newblock Fast neighbor search by using revised \emph{k}-d tree.
\newblock {\em Inf. Sci.}, {\bf 472}, 145--162.

\bibitem[\protect\citename{Clarenz {\em et~al.}, }2004]{clarenz2004finite}
Clarenz, Ulrich, Rumpf, Martin, \& Telea, Alexandru. 2004.
\newblock Finite Elements on Point Based Surfaces.
\newblock {\em Pages  201--211 of:} {\em 1st Symposium on Point Based Graphics,
  {PBG} 2004, Zurich, Switzerland, June 2-4, 2004}.
\newblock Eurographics Association.

\bibitem[\protect\citename{Clarkson, }1994]{clarkson1994algorithm}
Clarkson, Kenneth~L. 1994.
\newblock An Algorithm for Approximate Closest-Point Queries.
\newblock {\em Pages  160--164 of:} {\em Proceedings of the Tenth Annual
  Symposium on Computational Geometry, Stony Brook, New York, USA, June 6-8,
  1994}.
\newblock {ACM}.

\bibitem[\protect\citename{Clarkson, }1999]{clarkson1999nearest}
Clarkson, Kenneth~L. 1999.
\newblock Nearest Neighbor Queries in Metric Spaces.
\newblock {\em Discret. Comput. Geom.}, {\bf 22}(1), 63--93.

\bibitem[\protect\citename{Connor \& Kumar, }2010]{connor2010knn}
Connor, Michael, \& Kumar, Piyush. 2010.
\newblock Fast Construction of k-Nearest Neighbor Graphs for Point Clouds.
\newblock {\em {IEEE} Trans. Vis. Comput. Graph.}, {\bf 16}(4), 599--608.

\bibitem[\protect\citename{Erickson, }2003]{erickson2003nice}
Erickson, Jeff. 2003.
\newblock Nice Point Sets Can Have Nasty Delaunay Triangulations.
\newblock {\em Discret. Comput. Geom.}, {\bf 30}(1), 109--132.

\bibitem[\protect\citename{Fleishman {\em et~al.}, }2005]{fleishman2005robust}
Fleishman, Shachar, Cohen{-}Or, Daniel, \& Silva, Cl{\'{a}}udio~T. 2005.
\newblock Robust moving least-squares fitting with sharp features.
\newblock {\em {ACM} Trans. Graph.}, {\bf 24}(3), 544--552.

\bibitem[\protect\citename{Gago {\em et~al.}, }2009]{gago2009bounded}
Gago, Silvia, Schlatter, \& Dirk. 2009.
\newblock Bounded expansion in web graphs.
\newblock {\em Commentationes Mathematicae Universitatis Carolinae}, {\bf
  50}(2), 181--190.

\bibitem[\protect\citename{Hu {\em et~al.}, }2015]{hu2015massively}
Hu, Linjia, Nooshabadi, Saeid, \& Ahmadi, Majid. 2015.
\newblock Massively parallel KD-tree construction and nearest neighbor search
  algorithms.
\newblock {\em Pages  2752--2755 of:} {\em 2015 {IEEE} International Symposium
  on Circuits and Systems, {ISCAS} 2015, Lisbon, Portugal, May 24-27, 2015}.
\newblock {IEEE}.

\bibitem[\protect\citename{Intel Threading Building Blocks, }n.d.]{TBB}
Intel Threading Building Blocks.
\newblock https://www.threadingbuildingblocks.org.

\bibitem[\protect\citename{J\'{a}J\'{a}, }1992]{jaja1992parallel}
J\'{a}J\'{a}, Joseph. 1992.
\newblock {\em An Introduction to Parallel Algorithms}.
\newblock USA: Addison Wesley Longman Publishing Co., Inc.

\bibitem[\protect\citename{Karger \& Ruhl, }2002]{karger2002finding}
Karger, David~R., \& Ruhl, Matthias. 2002.
\newblock Finding nearest neighbors in growth-restricted metrics.
\newblock {\em Pages  741--750 of:} {\em Proceedings on 34th Annual {ACM}
  Symposium on Theory of Computing, May 19-21, 2002, Montr{\'{e}}al,
  Qu{\'{e}}bec, Canada}.
\newblock {ACM}.

\bibitem[\protect\citename{Kazana \& Segoufin, }2013]{kazana2013enumeration}
Kazana, Wojciech, \& Segoufin, Luc. 2013.
\newblock Enumeration of first-order queries on classes of structures with
  bounded expansion.
\newblock {\em Pages  297--308 of:} {\em Proceedings of the 32nd {ACM}
  {SIGMOD-SIGACT-SIGART} Symposium on Principles of Database Systems, {PODS}
  2013, New York, NY, {USA} - June 22 - 27, 2013}.
\newblock {ACM}.

\bibitem[\protect\citename{L{\'{o}}pez{-}Sastre {\em et~al.},
  }2012]{javier2012fast}
L{\'{o}}pez{-}Sastre, Roberto~Javier, O{\~{n}}oro{-}Rubio, Daniel,
  Gil{-}Jim{\'{e}}nez, Pedro, \& Maldonado{-}Basc{\'{o}}n, Saturnino. 2012.
\newblock Fast reciprocal nearest neighbors clustering.
\newblock {\em Signal Process.}, {\bf 92}(1), 270--275.

\bibitem[\protect\citename{Mitra {\em et~al.}, }2004]{mitra2004estimating}
Mitra, Niloy~J., Nguyen, An~Thanh, \& Guibas, Leonidas~J. 2004.
\newblock Estimating surface normals in noisy point cloud data.
\newblock {\em Int. J. Comput. Geom. Appl.}, {\bf 14}(4-5), 261--276.

\bibitem[\protect\citename{Pajarola, }2005]{pajarola2005stream}
Pajarola, Renato. 2005.
\newblock Stream-Processing Points.
\newblock {\em Pages  239--246 of:} {\em 16th {IEEE} Visualization Conference,
  {IEEE} Vis 2005, Minneapolis, MN, USA, October 23-28, 2005, Proceedings}.
\newblock {IEEE} Computer Society.

\bibitem[\protect\citename{Pauly {\em et~al.}, }2003]{pauly2003shape}
Pauly, Mark, Keiser, Richard, Kobbelt, Leif, \& Gross, Markus~H. 2003.
\newblock Shape modeling with point-sampled geometry.
\newblock {\em {ACM} Trans. Graph.}, {\bf 22}(3), 641--650.

\bibitem[\protect\citename{Ram \& Sinha, }2019]{ram2019revisiting}
Ram, Parikshit, \& Sinha, Kaushik. 2019.
\newblock Revisiting kd-tree for Nearest Neighbor Search.
\newblock {\em Pages  1378--1388 of:} {\em Proceedings of the 25th {ACM}
  {SIGKDD} International Conference on Knowledge Discovery {\&} Data Mining,
  {KDD} 2019, Anchorage, AK, USA, August 4-8, 2019}.
\newblock {ACM}.

\bibitem[\protect\citename{Salam \& Cacciari, }2006]{salam2006jet}
Salam, Gavin, \& Cacciari, Matteo. 2006.
\newblock Jet clustering in particle physics, via a dynamic nearest neighbour
  graph implemented with CGAL.
\newblock 04.

\bibitem[\protect\citename{Segoufin \& Vigny, }2017]{segoufin2017constant}
Segoufin, Luc, \& Vigny, Alexandre. 2017.
\newblock Constant Delay Enumeration for {FO} Queries over Databases with Local
  Bounded Expansion.
\newblock {\em Pages  20:1--20:16 of:} {\em 20th International Conference on
  Database Theory, {ICDT} 2017, March 21-24, 2017, Venice, Italy}.
\newblock LIPIcs, vol. 68.
\newblock Schloss Dagstuhl - Leibniz-Zentrum f{\"{u}}r Informatik.

\bibitem[\protect\citename{Shun {\em et~al.}, }2012]{shun2012pbbs}
Shun, Julian, Blelloch, Guy~E., Fineman, Jeremy~T., Gibbons, Phillip~B.,
  Kyrola, Aapo, Simhadri, Harsha~Vardhan, \& Tangwongsan, Kanat. 2012.
\newblock Brief announcement: the problem based benchmark suite.
\newblock {\em Pages  68--70 of:} {\em 24th {ACM} Symposium on Parallelism in
  Algorithms and Architectures, {SPAA} '12, Pittsburgh, PA, USA, June 25-27,
  2012}.
\newblock {ACM}.

\bibitem[\protect\citename{Singh {\em et~al.}, }2021]{singh2021fresh}
Singh, Aditi, Subramanya, Suhas~Jayaram, Krishnaswamy, Ravishankar, \&
  Simhadri, Harsha~Vardhan. 2021.
\newblock FreshDiskANN: {A} Fast and Accurate Graph-Based {ANN} Index for
  Streaming Similarity Search.
\newblock {\em CoRR}, {\bf abs/2105.09613}.

\bibitem[\protect\citename{Tangelder \& Fabri, }2020]{tangelder2020spatial}
Tangelder, Hans, \& Fabri, Andreas. 2020.
\newblock {dD} Spatial Searching.
\newblock {\em In:} {\em {CGAL} User and Reference Manual}, 5.2 edn.
\newblock CGAL Editorial Board.

\bibitem[\protect\citename{Vaidya, }1986]{Vaidya86}
Vaidya, Pravin~M. 1986.
\newblock An optimal algorithm for the All-Nearest-Neighbors Problem.
\newblock {\em Pages  117--122 of:} {\em 27th Annual Symposium on Foundations
  of Computer Science, Toronto, Canada, 27-29 October 1986}.
\newblock {IEEE} Computer Society.

\end{thebibliography}

\newpage

\appendix

\section{Other Nearest Neighbor Implementations}\label{apdx: otherimpls}

For the purpose of comparison we use three existing implementations of nearest neighbor search: CGAL~\cite{alliez2016cgal}, STANN~\cite{connor2010knn}, and Chan~\cite{chan2006minimalist}. Here we describe some performance issues with their code, and some modifications we made to improve the performance of their code.

Furthermore, we give a brief explanation for why we do not benchmark Delaunay triangulation based methods, since those are sometimes used for computing nearest neighbors in low dimensions. Firstly, the existing implementations for finding Delaunay triangulations were very slow: on 10 million points, the ParlayLib built-in function takes a few seconds, compared to less than .05 seconds to build the kd-tree. Secondly, the literature we found on Delaunay triangulations focused on the 2D case~\cite{birn2010simple}, while our experiments focused on the 3D case. 

\paragraph{Chan.}  Chan's code was fully sequential so we needed to parallelize it.  Conceptually this is relatively straightforward since the algorithm just requires using a parallel sort instead of a sequential one, and then running the queries in parallel.  Indeed the first step of using a parallel sort was easy and we just replaced the C++ STL sort with the ParlayLib sort.  The second step was required some work since the code was not thread safe.  However, once modified and using the parallel loop from ParlayLib the code achieves very good speedup---about 75-fold speedup on 72 cores with 144 threads (see Section~\ref{sec: experiments} for the full details).

Chan's algorithm only describes how to search from the root, and correspondingly his code only searches from the root.  There seems to be no inherent reason that it would not be possible to start at the leaves when generating a nearest neighbor graph, but we did not implement such a variant.  We note that even the root-based implementation of our code is significantly faster than Chan's. Chan's code uses arrays to store the points and therefore does not support dynamic updates.  In his paper he mentions that his algorithm could support dynamic updates by storing the points in a balanced binary search tree in Morton order. This would require completely rewriting their code. Experiments with Chan's code only deal with the $k=1$ case since his algorithms did not provide support for higher $k$. 

\paragraph{STANN.}  STANN includes both a k-nearest neighbor graph (KNNG) function and a $k$-nearest neighbor (KNN) function.  The first finds the $k$ nearest neighbors among a set of points, and the second supports a function to build a tree and a separate function to query a point for its $k$-nearest neighbors.  They supply a parallel version of KNNG, that was parallelized with OpenMP, and only a sequential version of KNN.  On our initial tests we were able to get performance on the parallel KNNG that closely match what they report.  However, their algorithm did not scale well beyond 16 threads (their numbers agree with this).  The main issues is that the algorithm left some components sequential, including the Morton order sort, and initializing various standard template library (STL) vectors.  For a larger number of cores this became the bottleneck.  We therefore updated their code to use ParlayLib using a parallel sort and replacing uses of STL vectors with ParlayLib sequences, which are initialized in parallel.  We also made a couple other optimizations, including changing the size of the base case of the recursive query from 4 to 10, and using vectors instead of a priority queues to store the nearest neighbors for a point when $k$ is small.  These changes made a significant improvement in performance, especially at a larger number of cores, as indicated in 
Figure~\ref{fig: STANNvParlay}.

The STANN KNN code was fully sequential.  We therefore parallelized this as well, which required much the same changes as we made to the parallel code (using a parallel sort, making all loops parallel, and using parlay sequences instead of STL vectors).  We also run the queries in parallel, which required some minor changes to make their code thread safe.  As with Chan's algorithm, STANN stores the points in an array (STL vector) and therefore does not support dynamic updates.

\paragraph{CGAL.}  CGAL implements a parallel version of their $k$-nearest neighbor code using the threading building blocks (TBB)~\cite{TBB}.  We use their code directly with no modifications.  We note that as with the original version of STANN KNNG, their code does not scale well past 16 or so threads.  We looked into this and there are several reasons.  
Perhaps most fundamentally, although in their recursive routine for building the kd-tree they invoke the two recursive calls in parallel, they do the splitting within each node completely sequentially.  At the root of the tree this means they do linear work completely sequentially.  Indeed from a theoretical point of view their algorithm does a total of $O(n \log n)$ work (assuming uniform input) and has $O(n)$ span, meaning it only has $O(\log n)$ parallelism.  Fixing this problem would require a major rewrite of their code.   

A second issue is that they allocate their tree nodes by pushing onto the back of a TBB concurrent vector.   Although this is thread safe, it requires a lock and becomes a bottleneck on a large number of threads.  
A similar issue appears to be true in the query.   In particular although the code appears to be thread safe giving correct answers when run in parallel, there seems to be contention when there are many threads slowing them all down.   This is often caused by some form of memory allocation, as with the build tree, but in this case we were not able to track down the source of the problem.    Due to the particularly bad performance beyond 36 threads (which all are on one chip), we only report numbers up to 36 threads. Furthermore, since we observed wildly varying times with higher $k$, we only included times for $k<10$ in our experiments. 

\section{Dynamic Queries}\label{apdx: dynqueries}

In this appendix, we provide data and commentary on our experimental results for non-dynamic queries. The results are presented graphically in Figure~\ref{fig: nondynqueries}.

\paragraph{Varying $k$, work efficiency.} (Figure~\ref{fig: dynamicupdates}(a) and (b).) All the algorithms we tested had similar performance for work efficiency and scalability of the number of nearest neighbors as they did in the k-nearest neighbor graph building case.

\paragraph{Scaling size of dataset.} (Figure~\ref{fig: dynamicupdates}(c).) Our algorithms are particularly robust compared to others when scaling the size of the dataset. One thing to notice in the relevant panel (c) of Figure~\ref{fig: dynqueries} is that with the exception of CGAL, all algorithms experience a local minima when the size of the dataset reaches $10^5$. While Chan, ParlayKNN, and CGAL's performances begin to slowly increase after this point, our algorithms' do not. The dataset of size $10^5$ is approximately where we expect cache misses to start affecting the performance; as explained in Section~\ref{sec: other_impls}, pre-sorting the data for dynamic queries helps alleviate this problem. 

\paragraph{Other datasets.} Since the bit-based and root-based algorithms performed very similarly on the random distribution for dynamic queries, we refer to Figure~\ref{fig: nondynqueries}(f) for evidence that the bit-based algorithm outperforms the root-based for some distributions.

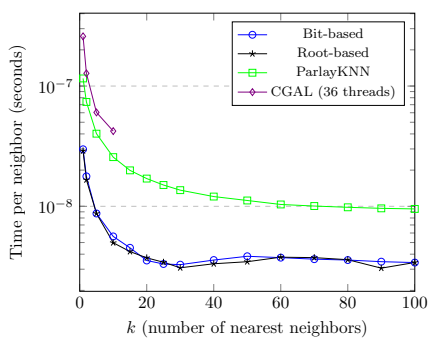
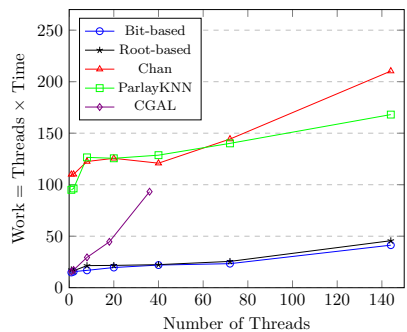
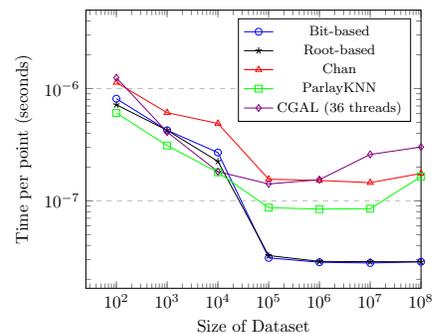
\begin{figure*}[t]
\vspace{-.5em}
	\begin{subfigure}{.66\columnwidth}
%
		\begin{tikzpicture}[scale = .65]
			\begin{axis}[
				legend style={font=\footnotesize},
				ymode = log,
				xlabel={$k$ (number of nearest neighbors)},
				ylabel={Time per neighbor (seconds)},
				xmin=0, xmax=100,
				legend pos=north east,
				ymajorgrids=true,
				grid style=dashed,
				]
				
				\addplot[
				color=blue,
				mark = o,
				]
				coordinates {

					(1, .298/10000000)
					(2, .355/20000000)
					(5, .436/50000000)
					(10, .562/100000000)
					(15, .678/150000000)
					(20, .709/200000000)
					(25, .829/250000000)
					(30, .982/300000000)
					(40, 1.43/400000000)
					(50,1.922 /500000000)
					(60, 2.247/600000000)
					(70, 2.546/700000000)
					(80, 2.861/800000000)
					(90, 3.12/900000000)
					(100, 3.407/1000000000)
					
				};
				\addlegendentry{Bit-based}

				\addplot[
				color=black,
				mark = star,
				]
				coordinates {
					(1, .289/10000000)
					(2, .332/20000000)
					(5, .436/50000000)
					(10, .498/100000000)
					(15, .631/150000000)
					(20, .746/200000000)
					(25, .857/250000000)
					(30, .925/300000000)
					(40, 1.332/400000000)
					(50, 1.732/500000000)
					(60, 2.266/600000000)
					(70, 2.622/700000000)
					(80, 2.88/800000000)
					(90, 2.756/900000000)
					(100, 3.412/1000000000)
				};
				\addlegendentry{Root-based}
				
				\addplot[
				color=green,
				mark = square,
				]
				coordinates {
					(1, 1.157/10000000)
					(2, 1.475/20000000)
					(5, 2.009/50000000)
					(10, 2.565/100000000)
					(15, 2.987/150000000)
					(20, 3.402/200000000)
					(25,3.759/250000000)
					(30, 4.089/300000000)
					(40, 4.822/400000000)
					(50, 5.592/500000000)
					(60,6.217/600000000)
					(70, 7.039/700000000)
					(80,7.853/800000000)
					(90, 8.659/900000000)
					(100, 9.516/1000000000)
					
				};
				\addlegendentry{ParlayKNN}
				
						\addplot[
				color=violet,
				mark = diamond,
				]
				coordinates {
				(1, 2.5896/10000000)
				(2, 2.5498/20000000)
				(5, 3.0186/50000000)
				(10, 4.2227/100000000)

				};
				\addlegendentry{CGAL (36 threads)}

			\end{axis}
		\end{tikzpicture}

			\caption{Time required to calculate a neighbor as $k$ increases. Calculated by dividing the total time by $k$ times the number of queries.}
	\end{subfigure}~~
	\begin{subfigure}{.66\columnwidth}
%
\begin{tikzpicture}[scale = .65]
	\begin{axis}[
		xlabel={Number of Threads},
		ylabel={Work = Threads $\times$ Time},
		legend style={font=\footnotesize},
		xmin=0, xmax=150,
		ymin=0, ymax=270,
		legend pos=north west,
		ymajorgrids=true,
		grid style=dashed,
		]
		
		\addplot[
		color=blue,
		mark = o,
		]
		coordinates {
			(1,14.8089*1)
			(2,7.8032*2)
			(8, 2.1096*8)
			(20, .9804*20)
			(40, .5501*40)
			(72, .3245*72)
			(144, .2869*144)
		};
		\addlegendentry{Bit-based}
		
					\addplot[
		color=black,
		mark = star,
		]
		coordinates {
			
			(1,17.3764*1)
			(2 ,8.2326*2)
			(8, 2.6835*8)
			(20, 1.0835*20)
			(40, .5599*40)
			(72, .3550*72)
			(144,  .3169*144)

		};
		\addlegendentry{Root-based}
		
		\addplot[
		color=red,
		mark = triangle,
		]
		coordinates {
			
			(1,110.0163*1)
			(2,55.0638*2)
			(8, 15.3169*8)
			(20, 6.2804*20)
			(40, 3.0225*40)
			(72, 2.0043*72)
			(144,1.4604*144)

		};
		\addlegendentry{Chan}

			\addplot[
		color=green,
		mark = square,
		]
		coordinates {
			(1,95.0146)
			(2, 48.1828*2)
			(8, 15.8163*8)
			(20, 6.2823*20)
			(40, 3.2169*40)
			(72, 1.9462*72)
			(144, 1.1671*144)

		};
		\addlegendentry{ParlayKNN}
		
		\addplot[
		color=violet,
		mark = diamond,
		]
		coordinates {
		(1, 15.6846)
		(2, 8.6082*2)
		(8, 3.6818*8)
		(18,2.4750*18)
		(36,2.5896*36)

	};
	\addlegendentry{CGAL}

	\end{axis}
\end{tikzpicture}

			\caption{Total work (threads $\times$ time) required to build a tree of 10 million points, then dynamically query the same number of points, on varying numbers of threads.}
	\end{subfigure}~~
	\begin{subfigure}{.66\columnwidth}
%
\begin{tikzpicture}[scale = .65]
		\begin{axis}[
		legend style={font=\footnotesize},
		xmode = log,
		ymode = log,
		xlabel={Size of Dataset},
		ylabel={Time per point (seconds)},
		xmin=0, xmax=100000000,
		ymax=.000005,
		legend pos=north east,
		ymajorgrids=true,
		grid style=dashed,
		]
		
		\addplot[
		color=blue,
		mark = o
		]
		coordinates {
			(100, 8.1246 /10000000)
			(1000, 4.253/10000000)
			(10000, 2.6935/10000000)
			(100000, .3103/10000000) 
			(1000000, .2838/10000000)
			(10000000, .2795/10000000)
			(100000000, 2.8652/100000000)
			
		};
		\addlegendentry{Bit-based}

		\addplot[
		color=black,
		mark = star,
		]
		coordinates {
			(100, 7.1787 /10000000)
			(1000, 4.2523/10000000)
			(10000 ,2.2482 /10000000)
			(100000, .3256/10000000) 
			(1000000, .2887 /10000000)
			(10000000, .2871/10000000)
			(100000000, 2.8632/100000000)
			
		};
		\addlegendentry{Root-based}
		
		\addplot[
		color=red,
		mark = triangle,
		]
		coordinates {
			(100, 11.3375/10000000)
			(1000, 6.1020/10000000)
			(10000, 4.8886/10000000)
			(100000, 1.5580/10000000) 
			(1000000, 1.5204/10000000)
			(10000000, 1.4528/10000000)
			(100000000, 17.5379/100000000)

		};
		\addlegendentry{Chan}

		\addplot[
		color=green,
		mark = square,
		]
		coordinates {
		(100, 6.0628/10000000)
		(1000, 3.1095/10000000)
		(10000,1.7957 /10000000)
		(100000, .8700/10000000) 
		(1000000, .8439/10000000)
		(10000000, .8518/10000000)
		(100000000, 16.4453/100000000)
		
		};
		\addlegendentry{ParlayKNN}
		
			\addplot[
		color=violet,
		mark = diamond,
		]
		coordinates {
			(100, 12.4963/10000000)
			(1000,4.1030/10000000)
			(10000,1.8211/10000000)
			(100000,1.4144/10000000)
			(1000000,1.5456/10000000)
			(10000000,2.5896/10000000)
			(100000000, 30.1612/100000000)
			
		};
		\addlegendentry{CGAL (36 threads)}
		
	\end{axis}
\end{tikzpicture}

			\caption{Time required to calculate nearest neighbors as the size of the dataset increases. Calculated by dividing the total time by the number of points queried.}

	\end{subfigure}
\vspace{-.5em}
\caption{\small Statistics related to dynamic queries. Unless otherwise stated, the size of the dataset is 10 million, 10 million dynamic queries were performed, the number of nearest neighbors $k=1$, experiments were performed on 144 threads on a 72-core Dell R930, and data points are drawn randomly from a 3D cube.}
\label{fig: dynqueries}
\vspace{-1em}
\end{figure*}

\section{Full Proof of Theorem~\ref{thm: batchupdate}}\label{apdx: fulltheory}

Here we extend the results in the main body where $X$ is a general convex space rather than a bounding cube. This is only a concern for Theorem~\ref{thm: batchupdate}; Theorems~\ref{thm: treebuild} and~\ref{thm: querytime} did not actually use the assumption on $X$.

Theorem~\ref{thm: batchupdate} works by giving an upper bound on the number of points in an update that can be inserted into an unbalanced path of length $\ell$. To generalize it to arbitrary convex spaces, we need to show that the upper bound a) still holds when the bounding box $B$ of $X$ is not a hypercube and b) when $X$ is not a hypercube or hyperrectangle. The following lemma concerns the former property.

\begin{lemma}\label{lem: squaremax}
The number of points in $X$ which can be part of an unbalanced path of length $\ell$ in $T$ is maximized when the bounding box $B$ of $X$ is a hypercube.
\end{lemma}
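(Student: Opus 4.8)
The plan is to reduce the statement to a comparison of near-face point counts and then to show that the cube is the extremal shape for the packing bound that underlies Lemma~\ref{lem: slicedensity}. First I would make precise what ``lying on an unbalanced path of length $\ell$'' means for a general box. By Lemma~\ref{lem: exteriorcuts}, a point $p$ whose perpendicular distance to the nearest face of $B$ is $f(p)$ can sit on an unbalanced path of length only $O(\log(s/f(p)))$, where $s$ is the side length of $B$ along the relevant dimension. Consequently the points that can lie on an unbalanced path of length $\ell$ are contained in the union over faces of slices of some width $r_\ell$ parallel to each face, with $r_\ell$ determined by $\ell$ and the side lengths. So it suffices to show that, for each fixed \emph{relative} slice width, the number of points the expansion constant permits inside the slice is maximized when $B$ is a cube.

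Second, I would re-run the halving decomposition of Lemma~\ref{lem: slicedensity} for a hyperrectangle $B$ with side lengths $s_1 \leq \cdots \leq s_d$. The slice parallel to a face is subdivided into cells, each of which must be doublable inside $X$, and the bounded-expansion inequality is applied cell by cell. The crucial quantity is the fraction $f$ of points that can be pushed onto the dense side at each cut, exactly as in the displayed inequality of Lemma~\ref{lem: slicedensity}; I would track how this admissible $f$ depends on the $s_i$ and argue that it is largest when all side lengths coincide.

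Third, I would establish extremality directly from the fact that the expansion constant is phrased in terms of cubes $\bbox(x,r)$. When $B$ is itself a cube, a near-face cell and its doubling both fit snugly inside $B$, so the permissible density gradient toward the face is as steep as the expansion constant allows and the packing constraint is binding. Elongating any dimension forces the doubled cube used in the argument to protrude from $B$, which only relaxes a constraint that was previously tight; hence it cannot increase the admissible count of near-face points. Combining this with the reduction in the first step yields the claim, since the number on unbalanced paths is monotone in these slice counts.

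The hard part will be getting the direction of the inequality right --- showing that deviating from a cube \emph{decreases}, rather than increases, the admissible number of near-face points --- together with reconciling the cut schedule, which cycles uniformly through all $d$ dimensions irrespective of the side lengths, with the non-uniform geometry of a hyperrectangle. In particular, after $d$ cuts a hyperrectangle is rescaled by $1/2$ in every dimension simultaneously, so the scalar ``radius'' used in the cube analysis must be replaced throughout by a per-dimension measure, and one must verify at each step that the cube invoked in the expansion argument genuinely fits inside the current sub-box; I expect this bookkeeping, rather than any single inequality, to be where the care is needed.
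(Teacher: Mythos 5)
Your plan has a real gap at exactly the step you flag as hard, and the heuristic you offer for it points in the wrong direction. You argue that elongating a dimension of $B$ forces the doubled box in the expansion argument to protrude, ``which only relaxes a constraint that was previously tight; hence it cannot increase the admissible count.'' But the bounded-expansion condition is the constraint that \emph{limits} how many points the adversary can crowd near a face: in a maximization over point configurations, weakening a binding constraint can only weakly \emph{increase} the maximum, never decrease it. Concretely, when $\bbox(x,2r)$ sticks out of $X$ it captures fewer points of $P$, so the inequality $|\bbox(x,2r)| \leq \gamma\,|\bbox(x,r)|$ extracts less from the sparse cells and the adversary gains slack. So step three, as stated, would if anything establish the reverse inequality, and the per-dimension bookkeeping you anticipate at the end cannot repair a directionally inverted argument. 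The extremality of the cube for the density bound is never actually proved in your proposal.

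The paper sidesteps all of this with a structural observation about the tree that your plan never invokes: by the empty-cut convention adopted before Theorem~\ref{thm: batchupdate} (an empty cut is skipped rather than forked into the tree), the zd-tree's splits are quad-/oct-tree splits of a bounding \emph{cube}, so if the longest side of the hyperrectangle $B$ is normalized to $n$ and some other side has length $r \ll n$, every cut perpendicular to the short side at scales above $r$ is empty and hence never appears in $T$; only $\log r$ nonempty cuts perpendicular to that side ever occur. Unbalanced path lengths are therefore dominated by the longest edge, and a non-cubical $B$ can only yield fewer points traversing unbalanced chains than the cube case already analyzed in Lemma~\ref{lem: slicedensity}---no re-derivation of the slice-density bound for hyperrectangles, and no extremality analysis over side lengths, is required. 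The shape-of-$X$ extremality you were reaching for is handled separately in the paper (Lemma~\ref{lem: volmax}) via a surface-area-to-volume comparison, not via tightness of the doubling constraint.
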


\begin{proof}
This follows from noting that the proof of Theorem~\ref{thm: treebuild} treated a split of $X$ as a split of an oct- or quad-tree---that is, each split is a $d$-dimensional split of a bounding cube into $2^d$ boxes. Thus, if the longest length in the bounding hyperrectangle $B$ of $X$ is normalized to $n$ and another side of $B$ has length $r<<n$, there will only be $\log r$ nonempty cuts of $X$ perpendicular to that side. Thus the length of an unbalanced path in $T$ will always be dominated by the length of the longest edge, and $B$ being a bounding rectangle can only result in fewer points in an update traversing an unbalanced chain than if $B$ were a hypercube.
\end{proof}

The second piece needed to apply Theorem~\ref{thm: batchupdate} to general convex spaces $X$ is to verify that the number of points near the boundary of $X$ is still upper bounded when $X$ is not a hyperrectangle.

\begin{lemma}\label{lem: volmax}
Let $X$ be convex with bounded expansion. Consider a subset of $X$ with volume $V$ such that volume $V$ is a fraction $f$ of $X$'s volume. Then $V$ contains at most $\left(\frac{\gamma^2}{\gamma^2+1} \right)^{\log 1/f}n$ points.
\end{lemma}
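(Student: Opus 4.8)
The plan is to mirror the proof of Lemma~\ref{lem: slicedensity}, replacing every appearance of ``distance $r$ from a face'' by the volume fraction $f$, and to supply the single extra geometric ingredient needed once $X$ is convex rather than a box. First I would invoke Lemma~\ref{lem: squaremax} to assume without loss of generality that the bounding box $B$ of $X$ is a cube; this guarantees that each cut I make is an axis-aligned halving of a cube-shaped region, so that the recursively produced subregions retain a bounded aspect ratio. The observation that links the two lemmas is that halving a volume $\log(1/f)$ times produces exactly an $f$-fraction of the volume, so it suffices to show that a \emph{single} volume-halving of a convex region can multiply the number of points it contains by at most $\frac{\gamma^2}{\gamma^2+1}$; composing this factor over the $\log(1/f)$ halvings then yields the claimed bound $\left(\frac{\gamma^2}{\gamma^2+1}\right)^{\log 1/f}n$.

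For the per-halving step I would take the current convex region $R$, cut it by a hyperplane into two equal-volume pieces $R_1$ and $R_2$, and suppose $R_2$ is the denser piece. To bound the points of $R_2$ I reuse the expansion-constant argument of Lemma~\ref{lem: slicedensity}: I locate a box $\beta$ lying inside the sparse half $R_1$ whose side length is comparable to that of $R$, chosen so that doubling $\beta$ twice completely contains $R_2$. The bounded expansion constant then gives $|R_2 \cap P| \le \gamma^2\, |\beta \cap P|$, and because $\beta$ sits in the sparser half, solving the same balance inequality as in Lemma~\ref{lem: slicedensity} forces $R_2$ to hold at most a $\frac{\gamma^2}{\gamma^2+1}$ fraction of the points of $R$. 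Composing this factor across the $\log(1/f)$ halvings bounds the densest $f$-fraction subregion of $X$. Finally I would note that an arbitrary subset of volume equal to an $f$-fraction of the volume of $X$ can only be sparser than the densest such subregion, and that since the base $\frac{\gamma^2}{\gamma^2+1}$ is below $1$, passing through the cube $B$ (where $X \subseteq B$ forces any $f$-fraction of $X$ to be at most an $f$-fraction of $B$) only strengthens the inequality, completing the reduction back to $X$.

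The hard part will be the geometric covering claim inside the halving step: for a general convex $R$ the two equal-volume halves need not be geometrically comparable, and for a badly elongated region a box placed in the sparse half might require many doublings, not a constant number, to reach across to the dense half, which would break the clean $\gamma^2$ factor. I would resolve this precisely by always halving along the longest side of $R$ --- equivalently, by combining the cube bounding box from Lemma~\ref{lem: squaremax} with the cyclic axis-aligned cuts of Algorithm~\ref{algo: treebuild} --- so that the two halves stay within a constant factor in every coordinate and the number of required doublings remains constant. Verifying that convexity prevents the points of $R_2$ from concentrating entirely outside any such constant-size box $\beta$ is the crux: convexity ensures $R_2$ cannot detach from a neighborhood of the cut, so the expandable box $\beta$ chosen near the cutting plane genuinely captures the density of the dense half under a constant number of doublings.
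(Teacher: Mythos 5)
Your skeleton matches the paper's: reduce to $\log(1/f)$ volume halvings, prove a per-halving bound of $\frac{\gamma^2}{\gamma^2+1}$ on the dense half via a box in the sparse half whose two doublings reach the dense side (giving $D \le \gamma^2(N-D)$), and compose. But your resolution of the difficulty you correctly flag as the crux --- that for an elongated region a single box in the sparse half cannot, after a constant number of doublings, span the dense half --- introduces a genuine gap. You fix it by always halving along the longest side so the pieces stay cube-like, but this restricts the halving process to producing only bounded-aspect-ratio regions, whereas the subsets to which this lemma is actually applied in the extended proof of Theorem~\ref{thm: batchupdate} are thin slabs within distance $r$ of the boundary of $X$, with aspect ratio $n/r$ --- exactly the shapes your cutting rule can never generate. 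Your closing step, that ``an arbitrary subset of volume equal to an $f$-fraction \ldots can only be sparser than the densest such subregion,'' is asserted in the wrong direction: the halving argument bounds only the regions the halving process produces, and nothing in your proof shows that a slab (or any other volume-$fV$ set) is dominated in density by the densest longest-side-halving region. The remark that passing to the bounding cube $B$ ``only strengthens the inequality'' does not repair this, since the point counts and the fraction $f$ are both taken relative to $X$, not $B$; likewise Lemma~\ref{lem: squaremax} plays no role inside this lemma in the paper --- it is used separately, at the level of the extended theorem.

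The paper avoids your covering claim entirely by an extremal (monotonicity) observation rather than a construction. It allows arbitrary halvings, and notes that when the current region is elongated, the sparse half can be packed with \emph{many} boxes each expandable twice toward the dense side (this is the decomposition into $2n/s$ boxes of side $s/2$ from Lemma~\ref{lem: slicedensity}); every such packed box forces additional points into the sparse half, so the fraction of points in the dense half is \emph{maximized} when only one expandable box fits --- the $d=1$ configuration --- whose balance inequality yields the factor $\frac{\gamma^2}{\gamma^2+1}$ for every geometry of the cut. In other words, you never need a single box whose two doublings cover the whole dense half; you only need that the one-box case is the least constrained, hence worst, case. Replacing your longest-side-cut device with this extremal reduction would make the per-halving factor shape-agnostic and restore the lemma's applicability to the boundary slabs that Theorem~\ref{thm: batchupdate} requires.
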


\begin{proof}
$O(\log 1/f)$ divisions of $X$ in half are needed to produce a shape of volume $V$. To maximize the number of points in $V$, we need to minimize the number of points that can be in the ``sparse" half of each cut. The number of points in the sparse half is dictated by how many individual boxes can be packed into the sparse area and then expanded twice to reach the boundary of the dense area. Thus, the number of points in the dense area is maximized when there is only one such box, i.e. when $d=1$. This implies that we can upper bound the number of points in $V$ by its upper bound when $d=1$, i.e. a $\left( \frac{\gamma^2}{\gamma^2+1} \right)^{\log 1/f}$ fraction of the total points. This bound will be sufficient to finish the proof of Theorem~\ref{thm: batchupdate}.
\end{proof} 

Now we are ready to prove the extended version of Theorem~\ref{thm: batchupdate}.

\begin{proof}[Proof of Theorem~\ref{thm: batchupdate}]
Fix a bounding box $B$. Now, for all convex shapes with bounding box $B$ and more sides than a hyperrectangle, the surface-area-to-volume ratio decreases from that of a hyperrectangle. Thus, for a given distance to the boundary of $X$, the volume in $X$ that can be at that distance or closer is only smaller when $X$ has more sides than a hyperrectangle. Since Lemma~\ref{lem: volmax} tells us that the upper bound given in Lemma~\ref{lem: slicedensity} applies no matter how the volume is arranged, the number of points in an update that can travel down an unbalanced path is no more than it would be if $X$ were a hyperrectangle. Thus, we can take the number of points (and their corresponding cost) in an update to a hyperrectangle as an upper bound for this case.

However, triangles, tetrahedra, and their higher-dimensional counterparts have a higher surface-area-to-volume ratio than the hyperrectangle. Since the dimension is constant and the bounding box $B$ is assumed to be the smallest possible, the surface-area-to-volume ratio of the tetrahedron is only a constant factor larger than that of the hypercube, and thus the result still holds.
\end{proof}

\end{document}